\newcommand{\Rmnum}[1]{\expandafter\@slowromancap\romannumeral #1@}
\newcommand{\tr}[2][1]{Tr_{#1}^{#2}}
\newcommand{\GF}[2][2]{{\mathbb F}_{#1^{#2}}}
\newcommand{\V}[2][2]{{\mathbb F}_{#1}^{#2}}
\newcommand{\B}{\mathcal B}
\newcommand{\F}{\mathbb F_2}
\newcommand{\AI}{{AI}}
\newcommand{\FAI}{{FAI}}
\newcommand{\FAA}{\mathcal{FAI}}
\newcommand{\LDA}{{LDA}}
\newcommand{\AN}{{AN}}
\newcommand{\wt}{wt}
\newcommand{\mindeg}{mindeg}
\newcommand{\MUL}{{MUL}}
\begin{document}
\title{Fast algebraic immunity of Boolean functions and LCD codes}
\titlerunning{Fast algebraic immunity of Boolean functions and LCD codes}

\author{Sihem Mesnager \inst{1}
\and Chunming Tang \inst{2}
}

\authorrunning{S. Mesnager and C. Tang}

\institute{Department of Mathematics, University of Paris VIII, 93526 Saint-Denis, University Sorbonne Paris Cit\'e, Laboratory Analysis, Geometry and Applications (LAGA), UMR 7539, CNRS, 93430 Villetaneuse,  and Telecom Paris, 91120 Palaiseau, France\\
\email{smesnager@univ-paris8.fr}\\
\and
School of Mathematics and Information, China West Normal University\\
 Nanchong, Sichuan,  637002, China\\
\email{tangchunmingmath@163.com}}

\maketitle

\begin{abstract}
Nowadays, the resistance against algebraic attacks and fast algebraic
attacks are considered as an important cryptographic property for
Boolean functions used in stream ciphers. Both attacks are very
powerful analysis concepts and can be applied to symmetric
cryptographic algorithms used in stream ciphers.
The notion of algebraic immunity has received wide attention since
it is a powerful tool to measure the resistance of a Boolean function
to standard algebraic attacks. Nevertheless, an algebraic tool to
handle the resistance to fast algebraic attacks is not clearly
identified in the literature. In the current paper, we propose a new
parameter to measure the resistance of a Boolean function to fast
algebraic attack. We also introduce the notion of fast immunity profile and
show that it informs both on the resistance to standard and fast
algebraic attacks. Further, we evaluate our parameter for two
secondary constructions of Boolean functions.
Moreover, A coding-theory approach to the characterization of perfect algebraic immune functions is presented.
Via this characterization, infinite families of binary linear complementary dual codes (or LCD codes for short) are obtained from perfect algebraic immune functions.
The binary LCD codes presented in this paper have applications in armoring implementations against
 so-called side-channel attacks (SCA) and fault non-invasive attacks, in addition to their applications in communication and data storage systems.

  \end{abstract}

{\bf Keywords} Boolean function $\cdot$ (Fast) Algebraic immunity $\cdot$ Algebraic attack  $\cdot$ Fast algebraic attack  $\cdot$ Reed-Muller code $\cdot$ LCD code $\cdot$ Side-channel attack $\cdot$ Fault injection attack.

\section{Introduction}
Boolean functions have important applications in the combiner model and the filter model of stream ciphers. A function used in such an application should mainly possess balancedness, a high algebraic degree, a high nonlinearity and, in the case of the combiner model, a high correlation immunity. In 2003, new kinds of attacks drawn from an original idea of
Shannon \cite{Shannon1949}  emerged; these attacks are called
\emph{algebraic attacks} and \emph{fast algebraic attacks}
\cite{Courtois2003,CourtoisMeier2003,MeierPasalicCarlet2004}.  Since 2003, the designers of cryptosystems in symmetric cryptography need also to ensure resistance to the algebraic attack (they need in practice optimal or almost optimal algebraic immunity) and good resistance to fast algebraic attacks and to the R{\o}njom-Helleseth attack \cite{RH2007,RH2011}, and its improvements.  A first nice primary construction of an infinite class of functions satisfying all the cryptographic criteria (balancedness, the algebraic degree the algebraic immunity and the non-linearity)  is the so-called Carlet-Feng construction \cite{CarletFeng2008}. Note that its good resistance to fast algebraic attacks has first been checked by computer for $n\leq 12$, using an algorithm from \cite{ACGKMR2006}, and later shown mathematically in \cite{Liu2012} for all n. Later, classes of functions have been proposed in the literature in which the authors suggested some modifications of the Carlet-Feng functions and other constructions (see for instance \cite{Carlet2013,CarletTang2015} and the references therein).

(Fast-) algebraic attacks have changed the situation in symmetric cryptography for the steam ciphers by
adding a new criterion of considerable importance to the above list. They
proceed by modeling the problem of recovering the secret key through an over-defined system of multivariate nonlinear equations of
algebraic degree at most $\deg(f)$. The core of algebraic attacks is
to find out low degree Boolean functions $g\not=0$ and $h$ such that
$fg=h$. It is shown in \cite{MeierPasalicCarlet2004} that this is
equivalent to the existence of low algebraic degree \emph{annihilators} of $f$,
that is, of $n$-variable Boolean functions $g$ such that
$f\cdot g = 0$ or $(1+ f)\cdot g=0$. The minimum degree of such $g$ is
called the \emph{algebraic immunity} of $f$, and we denote it by
$\AI(f)$. It must be as high as possible (the optimum value of
$\AI(f)$ being equal to $\left\lceil\frac{n}{2}\right\rceil$).
In 2020, a  novel application of Boolean functions with high algebraic immunity in minimal codes has been derived in \cite{CDMT2020}.
Fast algebraic attacks proceed differently and exploit the existence
of function $g$ of small degree such that the degree of $f\cdot g$ is
not too large. Many authors have indicated that having a high
algebraic immunity is not only a necessary condition for resistance to
standard algebraic attacks but also for resistance to fast algebraic
attacks. Nevertheless, having a high algebraic immunity may not be
sufficient in the design of pseudo-random generators using a Boolean function as filter or combiner
 (see \cite{Carlet2010}).  That motivates to define a new parameter to measure
the resistance of the Boolean function $f$ used in such generators to fast
algebraic attacks. Such a parameter has been proposed in
\cite{CarletTang2015,DuZhangLiu2012,589507}.
Very recently, M\'eaux has studied in \cite{Meaux_FAI_Maj,Meaux_FAI_Thres} the fast algebraic immunity of interesting families of cryptographic Boolean functions, namely the so-called majority functions (which have been intensively studied in the area of cryptography, because of their practical advantages and good properties), and Threshold functions (which are a sub-family of symmetric Boolean functions, which means that the output is independent of the order of the input binary variables). In 2020, Tang \cite{Tang2020} has derived a relation on the fast algebraic immunity between a Boolean function and its modifications, by introducing a new concept called partial fast algebraic immunity. As applications of this relation, he derived some upper bounds on the fast algebraic immunity of several known classes of modified majority functions with optimal algebraic immunity. These bounds show that these modified majority functions still have low fast algebraic immunity, which is coincident with the relation. A very nice reference on this topic is the excellent book of Carlet \cite{Book-Carlet} (which will appear soon).

In this paper, we provide for the first time a link between fast algebraic immune Boolean functions and the so-called \emph{linear complementary dual  code} (abbreviated LCD). An LCD code  is defined as a linear code $\mathcal C$ whose (Euclidean) dual code $\mathcal C ^ \perp$ satisfies $\mathcal C \cap \mathcal C^ \perp=\{\mathbf{0}\}$. LCD codes have been widely applied in data storage, communications systems, consumer electronics, and cryptography. In \cite{mas92}, Massey showed that LCD codes provide an optimum linear coding solution for the two-user binary adder channel. In 2014, Carlet and Guilley  \cite{CG14} investigated an interesting application of binary LCD codes against side-channel attacks (SCA) and fault injection attacks (FIA) and presented several constructions of LCD codes. It was shown non-binary LCD codes in characteristic $2$ can be transformed into binary LCD codes by expansion.  It is then important to keep in mind that, for SCA, the most interesting case is when the code is defined over an alphabet of size $q$ with $q$ even. The recent literature is abundant about LCD codes. One of the most important results on the classification of LCD codes is that any linear code over $\mathbb F_{q}$ ($q>3$) is equivalent to an (Euclidean) LCD code \cite{CMTQP2018}. A complete state-of-the-art on LCD codes can be found in the recent article \cite{CLM2019} and the references therein.

 This paper is organized as follows. In Subsection
\ref{sec:new-definition-fast}, we modify the parameter proposed in
\cite{DuZhangLiu2012,589507} so that it does not depend on the
algebraic immunity as in \cite{DuZhangLiu2012,589507} that we denote
by $\FAI (f)$. We show that the value of the modified parameter is less or
equal to the one proposed in \cite{DuZhangLiu2012} and \cite{589507} for every
Boolean function. 
In Subsection \ref{sec:fast-immun-prof}, we introduce the notion of
the immunity profile of a Boolean function and show that both the
algebraic immunity and the $\FAI$ of a Boolean function can be deduced
from this immunity profile. 
In Subsection \ref{sec:fast-immun-algebr}, we show that if a function is at low Hamming distance from a low algebraic degree function, then it is weak against fast algebraic attacks and we study further the behavior of $\FAI$. 
In Subsection \ref{sec:fast-algebr-immun}, we study
the $\FAI$ of a classical secondary construction of Boolean function,
which it called concatenation Boolean function. We prove that the
$\FAI$ of the concatenation of the Boolean function can be bounded from
below and above by the $\FAI$ of its sub-functions.
Finally, in Section \ref{sec:FAI-LCD} we present a
coding-theory characterization of perfect algebraic immune Boolean functions
by means of the LCD-ness of punctured Reed-Muller codes and derive some new infinite families of LCD codes.

\section{Preliminaries and notation}\label{preliminaries}
In this section, we give a brief introduction to algebraic immunity,
 Reed-Muller codes and linear complementary dual codes, which
are the foundations of other sections.
\subsection{Algebraic immunity of Boolean functions}
Let $n$ be any positive integer. In this paper, we shall denote by
$\B_n$ the set of all $n$-variable Boolean functions over $\F^n$.
Any $n$-variable Boolean function $f$ (that is a mapping from
$\F^n$ to $\F$) admits a unique \emph{algebraic normal form} (ANF),
that is, a representation as a multivariate polynomial over $\F$
\begin{displaymath}
f(x_1,\ldots,x_n)=\sum_{I\subseteq \{1,\ldots ,n\}}
a_I\,\prod_{i\in I}{x_i},
\end{displaymath}
where the $a_I$'s are in $\F$. The terms $\prod_{i\in I} {x_i}$ are
called {\em monomials}.  The {\em algebraic degree} $\deg(f)$ of a
Boolean function $f$ equals the maximum degree of those monomials
whose coefficients are nonzero in its algebraic normal form.

If we identify $\F^n$ with the Galois field $\GF n$  of order
$2^n$, Boolean functions of $n$-variables are then the binary functions over the Galois field $\GF n$ (one can always endow this
vector space with the structure of a field, thanks to the choice of a basis of $\GF n$ over $\F$)
 of order
$2^n$.
The \emph{support} of $f$, denoted by $supp(f)$, is the set of elements of $\GF{n}$ whose image under $f$ is $1$, that is,
$supp(f)=\left \{  x\in \GF{n}: f(x)=1 \right \}$.
The {\em weight} of $f$, denoted by $\wt(f)$, is the {\em
Hamming weight} of the image vector of $f$, that is, the cardinality
of its support $supp(f):=\{x\in \GF{n}\mid f(x)=1\}$.

For any positive integer $k$, and $r$ dividing $k$,
the trace function from $\GF{k}$ to $\GF{r}$, denoted by
$\tr[r]{k}$, is the mapping defined as: $\forall x\in\GF{k} ,\quad
  \tr[r]{k}(x):=\sum_{i=0}^{\frac kr-1}
  x^{2^{ir}}$. In particular, we denote the {\em absolute trace} over $\GF{}$ of an element $x \in
\GF {n}$ by $\tr{n}(x) = \sum_{i=0}^{n-1} x^{2^i}$.
Every non-zero Boolean function $f$ defined on $\GF{n}$ has a (unique)
trace expansion of the form:
\begin{displaymath}
  \forall x\in\GF n ,\quad
  f(x)=\sum_{j\in\Gamma_n}\tr{{o(j)}}(a_jx^{j}) + \epsilon
  (1+x^{2^n-1})
\end{displaymath}
called its polynomial form, where $\Gamma_n$ is the set of integers
obtained by choosing one element in each cyclotomic class of $2$
modulo $2^n-1$, $o(j)$ is the size of the cyclotomic coset of $2$
modulo $2^n-1$ containing $j$, $a_j \in \GF{o(j)}$ and,
$\epsilon=\wt(f)$ modulo $2$.  The algebraic degree of $f$ is equal to
the maximum 2-weight of an exponent $j$ for which $a_j\not= 0$ if
$\epsilon=0$ and to $n$ if $\epsilon=1$. We recall that the $2$-weight
of an exponent $j$, that we denote by $w_2(j)$, is the number of
$1$ in its binary expansion.

An $n$-variable Boolean function $g$ is said to be an {\it
  annihilator} of an $n$-variable Boolean function $f$ if
$f\cdot g=0$, where $f \cdot g$ is the Boolean function whose output equals the product in $\F$ of the outputs of $f$ and $g$. The set of all non-zero annihilators of a Boolean
function $f$ shall be denoted by $\AN(f)$. We shall denote by $\AN^c(f)$
the complement of the set $\AN(f)$, that is, the set of $n$-variable
Boolean function $f$ such that $f\cdot g\not=0$. We shall denote by
$\LDA(f)$ the minimum degree of non-zero annihilators of $f$
\cite{MeierPasalicCarlet2004}. The {\it algebraic immunity} $\AI(f)$
of $f$ is the minimum value between $\LDA(f)$ and $\LDA(1+f)$. Obviously, for  a Boolean function $f$, we have $\AI(1+f)=\AI(f)$. In addition, algebraic immunity is invariant under affine transformations. More specifically, if  $f$ is an $n$-variable Boolean function and $A$ is an affine
  automorphism of $\V n$ then, $\AI(f)=\AI(f\circ A)$.

\subsection{Linear codes and Reed-Muller codes}
An $[\ell,k]$ code $\mathcal C$ over the finite field $\mathbb F_q$ is a linear subspace of
$\mathbb F_q^{\ell}$ with dimension $k$. For convenience, we will denote by $\mathrm{dim}_{\mathbb F_q}(\mathcal C)$
the dimension of the code $\mathcal C$. The dual code, denoted by $\mathcal C^{\perp}$, of $\mathcal C$ is a linear
code with dimension $\ell - k$ and is defined by
$$\mathcal C^{\perp}= \left \{ ( w_{i})_{i=1}^{\ell}\in \mathbb F_q^{\ell}:  \begin{array}{l}
  c_1w_1 + \cdots +c_{\ell } w_{\ell } =0\\
  \text{for all } (c_1, \cdots, c_{\ell})\in \mathcal C
  \end{array}
\right \}.$$

Puncturing and shortening  are classical techniques used to obtain codes of length less than $\ell$ from  mother codes having length $\ell$, thus decreasing the length of codes.
These constructions will be useful for understanding fast algebraic immunity.
Given an $[\ell, k]$ code $\mathcal C$, we can puncture it by deleting the same coordinate $i$ in each codeword. The resulting code is denoted by $\mathcal C^{\{i\}}$.
For any set $D$ of coordinates in $\mathcal C$, we use $\mathcal C^{D}$ to denote the code obtained by puncturing $\mathcal C$ in all coordinates in $D$.
Let $\mathcal C (D)$ be the set of codewords which are $\mathbf 0$ on $D$. Then the shortened code of $\mathcal C$ in all coordinates in $D$, denoted by $\mathcal C_{D}$,
is the code obtained by puncturing the coordinates of $\mathcal C(D)$ in $D$. Puncturing a code is equivalent to shortening the dual code, as explained by the following
proposition,  whose proof can be found in \cite[Theorem 1.5.7]{HP03}.
\begin{proposition}\label{prop:punc-shor-dual}
Let $\mathcal C$ be an $[\ell, k]$ code over $\mathbb F_q$ and let $D$ be any set of coordinates of $\mathcal C$. Then
$$\left ( \mathcal C^D \right ) ^{\perp} = \left ( \mathcal C^{\perp} \right )_{D} \text{ and }
 \left ( \mathcal C_D \right ) ^{\perp} = \left ( \mathcal C^{\perp} \right )^{D}.$$
\end{proposition}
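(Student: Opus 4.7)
The plan is to prove the first equality $(\mathcal{C}^D)^{\perp} = (\mathcal{C}^{\perp})_D$ by double inclusion, and then to deduce the second equality by a duality argument. To keep notation light, I would write $\pi_D : \mathbb{F}_q^{\ell} \to \mathbb{F}_q^{\ell - |D|}$ for the coordinate-deletion map, so that $\mathcal{C}^D = \pi_D(\mathcal{C})$, and I would speak of the \emph{zero-extension} of a vector $w \in \mathbb{F}_q^{\ell-|D|}$ as the unique $\tilde{w} \in \mathbb{F}_q^{\ell}$ satisfying $\pi_D(\tilde{w}) = w$ and $\tilde{w}_i = 0$ for $i \in D$. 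With these conventions, a vector lies in $(\mathcal{C}^{\perp})_D$ precisely when it is the projection $\pi_D(\tilde{w})$ of some $\tilde{w} \in \mathcal{C}^{\perp}$ vanishing on $D$.

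For the inclusion $(\mathcal{C}^{\perp})_D \subseteq (\mathcal{C}^D)^{\perp}$, I would start from such a $w = \pi_D(\tilde{w})$ and an arbitrary $c' = \pi_D(c) \in \mathcal{C}^D$ with $c \in \mathcal{C}$, and observe that $\langle w, c'\rangle = \sum_{i \notin D} \tilde{w}_i c_i = \langle \tilde{w}, c\rangle = 0$, since $\tilde{w}_i = 0$ for $i \in D$ and $\tilde{w} \in \mathcal{C}^{\perp}$. For the reverse inclusion $(\mathcal{C}^D)^{\perp} \subseteq (\mathcal{C}^{\perp})_D$, I would take an arbitrary $w \in (\mathcal{C}^D)^{\perp}$, form its zero-extension $\tilde{w}$, and verify that $\tilde{w} \in \mathcal{C}^{\perp}$: for each $c \in \mathcal{C}$ we have $\langle \tilde{w}, c\rangle = \sum_{i \notin D} w_i c_i = \langle w, \pi_D(c)\rangle = 0$ because $\pi_D(c) \in \mathcal{C}^D$. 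Since $\tilde{w}$ vanishes on $D$ and $w = \pi_D(\tilde{w})$, this shows $w \in (\mathcal{C}^{\perp})_D$.

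The second identity $(\mathcal{C}_D)^{\perp} = (\mathcal{C}^{\perp})^D$ I would deduce from the first by applying it with $\mathcal{C}^{\perp}$ in place of $\mathcal{C}$: this yields $\bigl((\mathcal{C}^{\perp})^D\bigr)^{\perp} = \bigl((\mathcal{C}^{\perp})^{\perp}\bigr)_D = \mathcal{C}_D$, using the standard double-dual identity $(\mathcal{C}^{\perp})^{\perp} = \mathcal{C}$; taking orthogonals on both sides then gives the claim. I do not anticipate any genuine obstacle in this proof, since the argument is essentially bookkeeping with the projection $\pi_D$ and the zero-extension. The only point requiring a moment of care is to keep track, at every step, of whether a vector currently lives in $\mathbb{F}_q^{\ell}$ or in $\mathbb{F}_q^{\ell - |D|}$, so that each inner product is interpreted in the correct ambient space.
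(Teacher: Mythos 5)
Your proof is correct: both inclusions for $(\mathcal{C}^D)^{\perp} = (\mathcal{C}^{\perp})_D$ are verified cleanly via the zero-extension, and deducing the second identity from the first by substituting $\mathcal{C}^{\perp}$ and using $(\mathcal{C}^{\perp})^{\perp} = \mathcal{C}$ is valid. The paper does not prove this proposition itself but defers to \cite[Theorem 1.5.7]{HP03}, and your argument is essentially that standard textbook proof, so there is nothing further to compare.
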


Reed-Muller codes first appeared in print in 1954 and remain ``$\cdots$  one of the oldest and best understood families of codes" \cite[p. 370]{MS86}.
As stated in \cite{Book-Carlet}, we use $\mathrm{RM}(d,n)$ to denote the $d$th-order
Reed-Muller code of length $2^n$. Each codeword in $\mathrm{RM}(d,n)$ is defined by evaluating an $n$-variable Boolean function
$h$ of degree at most $d$ at all points in $\mathbb F_2^n$. The Reed-Muller codes $\mathrm{RM}(d,n)$ have been shown to be equivalent to primitive cyclic codes
(codes of length $2^n-1$) with an overall parity check added \cite{KLP68}. Let $\alpha$ be a primitive element of $\mathbb F_{2^n}$.
Let  $P_0=0$ and $P_j=\alpha^{j-1}$, where
$1\le j \le 2^n-1$.
Then $P_0,...,P_{2^n-1}$ is an enumeration of the points of the vector space $\mathbb F_{2^n}$.
Under this enumeration, the Reed-Muller code $\mathrm{RM}(d, n)$ of order $d $ in $n$ variables may be written as
\begin{eqnarray*}
\mathrm{RM}(d, n)=\left \{(f(P_0), \cdots, f(P_{2^n-1})):  f\in \B _n, \mathrm{deg}(f)\le d \right \}.
\end{eqnarray*}

   We summarize the results on the properties of the Reed-Muller codes in the following theorem.
   For the details of proof we refer the reader to \cite{KLP68,MS86}.
   \begin{theorem}\label{thm:RM}
 Let $n$ be any positive integer and let $0\le d \le n$.
 \begin{enumerate}
 \item[(1)] The Reed-Muller code $\mathrm{RM}(d, n)$ is a binary linear code of dimension $\sum_{i=0}^{d} \binom{n}{i}$.
 \item[(2)] The dual code of $\mathrm{RM}(d, n)$ is the code $\mathrm{RM}(n-d-1, n)$.
 \end{enumerate}

   \end{theorem}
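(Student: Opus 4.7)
The strategy is to analyze the evaluation map $\mathrm{ev} \colon \B_n \to \mathbb{F}_2^{2^n}$ defined by $f \mapsto (f(P_0),\ldots,f(P_{2^n-1}))$. This map is $\mathbb{F}_2$-linear, and since a Boolean function is uniquely determined by its truth table, it is actually a bijection. Consequently $\mathrm{RM}(d,n)$ is, up to this identification, the subspace $\B_n^{(d)}\subseteq \B_n$ of Boolean functions of degree at most $d$, and properties of the code can be read off from properties of this subspace.

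For part (1), I would invoke the uniqueness of the algebraic normal form: the monomials $\{\prod_{i\in I} x_i : I\subseteq\{1,\ldots,n\},\ |I|\le d\}$ form an $\mathbb{F}_2$-basis of $\B_n^{(d)}$. Counting these monomials by the size of $I$ yields $\sum_{i=0}^{d}\binom{n}{i}$, which by the injectivity of $\mathrm{ev}$ is also the $\mathbb{F}_2$-dimension of $\mathrm{RM}(d,n)$.

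For part (2), I would proceed in two steps. First, a dimension check: using the symmetry $\binom{n}{i}=\binom{n}{n-i}$,
$$\dim\mathrm{RM}(d,n)+\dim\mathrm{RM}(n-d-1,n)=\sum_{i=0}^{d}\binom{n}{i}+\sum_{i=0}^{n-d-1}\binom{n}{i}=\sum_{i=0}^{n}\binom{n}{i}=2^n,$$
which matches $\dim\mathrm{RM}(d,n)+\dim\mathrm{RM}(d,n)^\perp$. Therefore it suffices to prove the inclusion $\mathrm{RM}(n-d-1,n)\subseteq\mathrm{RM}(d,n)^\perp$, that is, for every $f\in\B_n^{(d)}$ and every $g\in\B_n^{(n-d-1)}$,
$$\langle\mathrm{ev}(f),\mathrm{ev}(g)\rangle=\sum_{x\in\mathbb{F}_2^{n}} f(x)g(x)\equiv 0\pmod{2}.$$
Since $\deg(fg)\le d+(n-d-1)=n-1$, the whole claim reduces to a single lemma: any Boolean function of degree strictly less than $n$ has an even number of ones in its truth table.

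The crux of the argument is this last lemma, and it will be the main technical step, though it is still elementary. By linearity it suffices to prove it for a monomial $\prod_{i\in I} x_i$ with $|I|<n$: this product evaluates to $1$ exactly on the $2^{n-|I|}$ vectors $x\in\mathbb{F}_2^n$ whose coordinates indexed by $I$ are all $1$, and $2^{n-|I|}$ is even whenever $|I|<n$. (The single exception $I=\{1,\ldots,n\}$ is precisely the monomial of degree $n$ that is excluded here, and it corresponds to the unique coset of $\mathrm{RM}(n-1,n)$ in $\mathrm{RM}(n,n)=\mathbb{F}_2^{2^n}$.) Once this monomial-sum computation is in hand, parts (1) and (2) follow by the dimension bookkeeping outlined above.
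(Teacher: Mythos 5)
Your proof is correct. Note that the paper itself does not prove this theorem; it states it as a summary of known results and refers the reader to \cite{KLP68,MS86}, so there is no internal argument to compare against. What you give is the standard self-contained proof: part (1) from the uniqueness of the ANF together with the injectivity of the evaluation (truth-table) map, and part (2) from the dimension count $\sum_{i=0}^{d}\binom{n}{i}+\sum_{i=0}^{n-d-1}\binom{n}{i}=2^n$ plus the inclusion $\mathrm{RM}(n-d-1,n)\subseteq\mathrm{RM}(d,n)^{\perp}$, reduced to the lemma that a function of degree less than $n$ has even weight. The two points worth making explicit if you write this up: first, the reduction of the lemma to monomials is legitimate because the parity of the weight, i.e. $f\mapsto\sum_{x\in\mathbb F_2^n}f(x)\in\mathbb F_2$, is itself an $\mathbb F_2$-linear functional (equivalently, $\wt(f+g)=\wt(f)+\wt(g)-2\wt(fg)$); second, the boundary case $d=n$ requires the convention $\mathrm{RM}(-1,n)=\{\mathbf 0\}$, under which your dimension bookkeeping still goes through. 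With those remarks the argument is complete.
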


\subsection{Linear complementary dual codes and self-orthogonal codes}
The \emph{hull} of a linear code $\mathcal C$ is defined to be $\mathrm{Hull}(\mathcal C) =\mathcal C \cap \mathcal C^{\perp}$.
When $\mathrm{Hull} (\mathcal C)= \mathcal C$,
$\mathcal C$ is
said to be \emph{self-orthogonal}.
In particular, if $\mathrm{Hull}(\mathcal C) =\mathcal C = \mathcal C^{\perp}$,
then $\mathcal C$ is called a self-dual
code.
 The code $\mathcal C$ is called a \emph{linear complementary dual code} (in brief, an \emph{LCD} code) if $\mathrm{Hull}(\mathcal C)$ is the zero space. These codes have been
 extensively studied recently \cite{CMTQ-DCC,CMTQ-SIG,CMTQP2018,JX,LDL17,LLD17,MTQ}.

 For a matrix $G$, $G^T$ denotes the transposed matrix of $G$. The \emph{Gram matrix} of $G$ is defined to be $GG^{T}$.
 The Gram matrix of a generator of a linear code plays an important role in the study of
 the hulls of linear codes  \cite{GJG18,ZLTD}.
\begin{proposition}\label{prop:LCD-Gram}
 Let $\mathcal C$ be an $[\ell, k]$ linear code over $\mathbb F_q$ with generator matrix $G$. Then
 \begin{eqnarray*}
 \mathrm{dim}_{\mathbb F_q} \left ( \mathrm{Hull} (\mathcal C) \right )=k-\mathrm{Rank} \left (GG^{T}\right).
 \end{eqnarray*}

 In particular,  $\mathcal C$ is LCD (resp. self-orthogonal) if and only if
 $GG^T$ is nonsingular  (resp. $GG^T = 0$).
 \end{proposition}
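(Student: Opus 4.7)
The plan is to use the generator matrix $G$ to parametrize $\mathcal C$ by $\mathbb F_q^k$ via the isomorphism $\varphi: x \mapsto xG$, and then translate the condition of lying in the dual into a linear condition on $x$ involving the Gram matrix $GG^T$.

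First I would observe that, because $G$ is a generator matrix, its rows are linearly independent, so $\varphi$ is an $\mathbb F_q$-linear bijection from $\mathbb F_q^k$ onto $\mathcal C$. Next, I would characterize membership in $\mathcal C^{\perp}$: a codeword $c = xG$ lies in $\mathcal C^{\perp}$ exactly when $\langle c, yG \rangle = 0$ for all $y \in \mathbb F_q^k$, i.e., $x G G^T y^T = 0$ for all $y$, which is equivalent to $x(GG^T) = 0$. Hence $\varphi^{-1}(\mathrm{Hull}(\mathcal C))$ is exactly the left null space of $GG^T$.

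Applying the rank-nullity theorem to the linear map $x \mapsto x(GG^T)$ from $\mathbb F_q^k$ to $\mathbb F_q^k$, the dimension of this null space equals $k - \mathrm{Rank}(GG^T)$. Since $\varphi$ is an isomorphism, this yields
\begin{equation*}
\mathrm{dim}_{\mathbb F_q}\bigl(\mathrm{Hull}(\mathcal C)\bigr) = k - \mathrm{Rank}(GG^T),
\end{equation*}
which is the first assertion.

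The ``in particular'' statement then follows immediately by specializing. The code $\mathcal C$ is LCD iff $\mathrm{Hull}(\mathcal C)=\{\mathbf 0\}$, which by the formula is equivalent to $\mathrm{Rank}(GG^T)=k$, i.e., the $k\times k$ matrix $GG^T$ being nonsingular. Similarly, $\mathcal C$ is self-orthogonal iff $\mathcal C\subseteq \mathcal C^\perp$ iff $\mathrm{Hull}(\mathcal C)=\mathcal C$, which by the formula is equivalent to $\mathrm{Rank}(GG^T)=0$, i.e., $GG^T=0$. There is no real obstacle here; the only subtlety worth stating carefully is why $\varphi$ is injective (full row rank of the generator matrix) so that dimensions transfer from the parameter space to $\mathcal C$, and the fact that $GG^T$ is square of size $k\times k$, which is what allows ``full rank'' to mean ``nonsingular''.
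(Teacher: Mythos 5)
Your proof is correct. The paper itself gives no proof of this proposition; it is stated as a known preliminary fact with a pointer to the literature on hulls of linear codes. Your argument --- parametrizing $\mathcal C$ by $x \mapsto xG$, identifying $\varphi^{-1}(\mathrm{Hull}(\mathcal C))$ with the left null space of $GG^T$, and applying rank--nullity --- is exactly the standard derivation of this identity, and the specializations to the LCD and self-orthogonal cases are handled correctly.
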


 A vector $\mathbf x=(x_1, x_2, \ldots, x_{\ell})$ in $\mathbb F_2^{\ell}$ is even-like if
$\sum_{i=1}^{\ell}  x_i=0.$
A binary code is said to be \emph{even-like} if it has only even-like codewords.
 The following proposition gives a necessary condition for an even-like code being LCD
\cite{CMTQ-cha}.
\begin{proposition}\label{prop:dim-even}
Let $\mathcal C$ be an even-like binary code with parameters $[\ell, k]$. If $\mathcal C$ is LCD then  $k$ is an even integer.
\end{proposition}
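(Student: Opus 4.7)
The plan is to analyze the Gram matrix $GG^T$ of a generator matrix $G$ of $\mathcal C$ and use Proposition \ref{prop:LCD-Gram}, which identifies LCD-ness with the non-singularity of $GG^T$.

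First, I would observe that the even-like condition $\sum_{i=1}^\ell x_i = 0$ for every codeword $x\in\mathcal C$ is equivalent to saying that every row of a generator matrix $G$ has even Hamming weight. Computing $GG^T$ over $\mathbb F_2$, the $(i,j)$-entry is the standard inner product of the $i$th and $j$th rows of $G$. In particular, the diagonal entries are $\sum_k G_{ik}^2 = \sum_k G_{ik} \pmod 2$, which equals the weight of the $i$th row modulo $2$, hence $0$. Thus $GG^T$ is a $k\times k$ symmetric matrix over $\mathbb F_2$ with vanishing diagonal.

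Next, I would invoke the classical fact that a symmetric matrix over $\mathbb F_2$ with zero diagonal is \emph{alternating} (i.e.\ $x^T M x = 0$ for all $x$), and that any alternating matrix over any field has even rank. Over $\mathbb F_2$ this follows from the existence of a symplectic-type normal form: by a congruence transformation $PMP^T$, such a matrix can be brought into block-diagonal form with $\begin{pmatrix}0&1\\1&0\end{pmatrix}$ blocks (of rank $2$) and zero blocks, and congruence preserves the rank. Hence $\mathrm{Rank}(GG^T)$ is even.

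Finally, since $\mathcal C$ is LCD, Proposition \ref{prop:LCD-Gram} gives $\mathrm{Rank}(GG^T) = k$, so $k$ must be even, as desired.

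The only step that is not completely automatic is the even-rank statement for alternating matrices over $\mathbb F_2$; I expect this to be the main (but very standard) obstacle, and I would either cite it from a textbook on bilinear forms or include a short inductive argument exhibiting the symplectic normal form. Everything else is a direct unwinding of definitions combined with Proposition \ref{prop:LCD-Gram}.
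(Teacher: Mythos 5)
Your proof is correct. Note that the paper does not actually prove this proposition---it only cites it from \cite{CMTQ-cha}---so there is no in-text argument to compare against; your route via the Gram matrix (zero diagonal from the even-like condition, hence $GG^T$ alternating, hence of even rank, hence $k$ even by the nonsingularity criterion of Proposition \ref{prop:LCD-Gram}) is the standard one and is complete, with the even-rank property of alternating forms over $\mathbb F_2$ being the classical fact you correctly identify as the only nontrivial ingredient.
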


\section{Fast algebraic immunity and fast immunity profile}

\subsection{A new definition of fast algebraic immunity and its consequences}\label{sec:new-definition-fast}
In the literature, different criteria have been proposed to characterize
the immunity of Boolean functions against fast algebraic attacks; some of those characterizations do not define a parameter but
a property that should satisfy a Boolean function to resist to fast
algebraic attacks \cite{Gong2008,Liu2012,Pasalic2008}.

\begin{definition}\label{dfi:immunite_algebrique}
  Let $f$ be an $n$-variable Boolean function.  We call {\it fast
    algebraic immunity} of $f$, denoted by $\FAI(f)$, the smallest
  value taken by $\deg(g)+\deg(f\cdot g)$ when $g\not \equiv 1$ ranges over the set
  $\AN^c(f)$. We say that such a $g$ {\it achieves} $\FAI(f)$.
\end{definition}

\begin{remark}\label{rem:courtois}
  It has been shown in \cite[Theorem 7.2.1]{Courtois2003} that, for
  any $n$-variable Boolean function $f$ and for every positive
  integers $d$ and $e$ such that $d+e\geq n$, there exists $g$ of
  algebraic degree at most $d$ and $h$ of algebraic degree at most $e$
  such that $f\cdot g=h$.  It implies that $\FAI(f)\leq n$.  Now, one
  has $\deg(f\cdot g)\geq \deg(g)$ if $\deg(f\cdot g)+\deg(g)=\FAI(f)$
  (
 indeed, suppose that $\deg(f \cdot g)< \deg(g)$ then $g$ cannot achieve $\deg(f \cdot g) + \deg(g) = \FAI(f)$, because, denoting $h=f\cdot g$, we have $f\cdot h=h$ and then $h\neq 0$ and $\deg(f \cdot h) + \deg(h)<\deg(f \cdot g) + \deg(g)$). Hence, if $g$ achieves
  $\FAI(f)$ then, necessarily, $\deg(g)\leq\frac n2$ since $\deg(g)+\deg(f \cdot g) =\FAI(f)\le n$ and $\deg(g) \le \deg(f \cdot g)$
  implies that $\deg (g) \leq \frac n2$.
\end{remark}

\begin{remark}\label{rem:courtois}
  In \cite{cryptoeprint:2015:435,Liu2012}, the authors proposed
  different criteria that should satisfy a Boolean function to be
  (almost) resistant to fast algebraic attacks. Those criteria are
  very similar.  Indeed, in \cite[Definition
  1]{cryptoeprint:2015:435}, it is defined that an $n$-variable
  Boolean function $f$ would be \emph{almost optimal resistant}
  against fast algebraic attacks if, for $1\leq e < \frac n2$,
  $\deg(f\cdot g)\geq n-e-1$ whenever $\deg(g)\leq e$ and
  $f\cdot g\not=0$. In \cite[Definition 2]{Liu2012}, the authors
  defined \emph{perfect algebraic immune functions} as the $n$-variable
  Boolean function $f$ such that, for every $1\leq e<\frac n2$,
  $\deg(f\cdot g)\geq n-e$ for any $n$-variable Boolean function $g$
  of algebraic degree at most $e$.

  Observe that the \emph{almost optimal resistance} defined in
  \cite{cryptoeprint:2015:435} is equivalent to $\FAI(f)\geq n-1$
  while the \emph{perfect algebraic immune functions} of \cite{Liu2012} are
  those such that $\FAI(f)\geq n$.

\end{remark}


We first derive from the definition of $\FAI$ an upper bound on the
algebraic degree of Boolean function achieving $\FAI$.

\begin{proposition}\label{prop:degre_min}
  Let $n$ be a positive integer.  Let $f$ be an $n$-variable
  function. Let $g$ an $n$-variable function achieving $\FAI(f)$.
  Then $\deg(g)\leq \left \lfloor\frac{\FAI(f)}2 \right \rfloor$ and
  $\deg(f\cdot g)\geq \left \lceil\frac{\FAI(f)}2 \right \rceil$
\end{proposition}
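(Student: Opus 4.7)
The plan is to piggy-back entirely on the key observation already spelled out in the remark immediately preceding the proposition, namely that any $g$ achieving $\FAI(f)$ must satisfy $\deg(g)\le \deg(f\cdot g)$. Once this inequality is in hand, both conclusions of the proposition are purely arithmetic consequences of $\deg(g)+\deg(f\cdot g)=\FAI(f)$.

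First I would re-record the argument for $\deg(g)\le\deg(f\cdot g)$ since it is the only nontrivial ingredient: suppose for contradiction that $\deg(f\cdot g)<\deg(g)$, and set $h:=f\cdot g$. Because $f^{2}=f$ as Boolean-valued functions, we have $f\cdot h = f\cdot(f\cdot g)=f\cdot g = h$, so $h\in\AN^{c}(f)$, and $h\not\equiv 0$ because $g\in\AN^{c}(f)$. Then
\[
\deg(h)+\deg(f\cdot h)=2\deg(f\cdot g)<\deg(g)+\deg(f\cdot g)=\FAI(f),
\]
contradicting the minimality in the definition of $\FAI(f)$. (One also has to rule out $h\equiv 1$, which would force $f\equiv 1$; in that trivial case $\FAI(f)=0$ and the statement is vacuous.)

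Next I would derive the two bounds. Writing $a=\deg(g)$ and $b=\deg(f\cdot g)$, we have $a+b=\FAI(f)$ and $a\le b$. Adding $a$ to both sides of $a\le b$ gives $2a\le\FAI(f)$, and since $a$ is an integer, this is equivalent to $a\le\lfloor\FAI(f)/2\rfloor$. Symmetrically, adding $b$ to $a\le b$ yields $\FAI(f)\le 2b$, hence $b\ge\lceil\FAI(f)/2\rceil$.

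The main obstacle, then, is essentially nonexistent: the content of the proposition is exactly the quantitative sharpening of the inequality noted in the remark (which only drew the weaker consequence $\deg(g)\le n/2$ from $\FAI(f)\le n$). The only subtlety worth stating carefully is the $h\not\equiv 1$ corner case and the verification that $f\cdot h=h$, both of which are immediate from idempotence of $f$ in $\mathcal B_{n}$.
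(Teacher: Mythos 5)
Your proof is correct and follows essentially the same route as the paper's: invoke the inequality $\deg(g)\le\deg(f\cdot g)$ from the preceding remark (which you re-derive in full, whereas the paper simply cites it) and then conclude by the same floor/ceiling arithmetic on $\deg(g)+\deg(f\cdot g)=\FAI(f)$. The only nit is your parenthetical that $f\equiv 1$ gives $\FAI(f)=0$ (it gives $2$), but that corner case cannot occur anyway since $h=f\cdot g\equiv 1$ would force $g\equiv 1$, which the definition excludes.
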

\begin{proof}
  Let $g\in\AN^c(f)$ achieving $\FAI(f)$, that is,
  $\deg(g)+\deg(f.g)=\FAI(f)$. Necessarily $\deg(f\cdot g)\geq\deg(g)$
  (see Remark \ref{rem:courtois}). Hence $2\deg(g)\leq\FAI(f)$, that
  is, $\deg(g)\leq \left \lfloor\frac{\FAI(f)}2 \right \rfloor$. Thus
  $\deg(f\cdot g)=\FAI(f)-\deg(g)\geq
  \FAI(f)- \left \lfloor\frac{\FAI(f)}2 \right \rfloor= \left \lceil\frac{\FAI(f)}2 \right \rceil$.
  
 \qed
\end{proof}


In \cite{589507}, it has been proposed another definition than ours
for the fast algebraic immunity. Indeed, in \cite{589507}, the authors
give the following definition for the fast algebraic immunity of a
Boolean function:
\begin{equation}\label{AI:589507}
  \begin{split}
    \min\left(2\AI(f),\min_{1\leq\deg(g)<\AI(f)}(\deg(g)+\deg(f\cdot g))\right).
  \end{split}
\end{equation}
\begin{remark}
Using the definition above, an upper bound of fast algebraic immunity of power functions has been established by Mesnager and Cohen \cite{MesnagerCohenAMC2016}. More precisely,  let $f(x)=Tr_1^n(\gamma x^d)$ where $\gamma\in\mathbb{F}_{2 n}$ and $d$ is a
  positive integer. Suppose that
  $\AI(f)\geq\Big\lceil\frac n{\lfloor{\sqrt n}\rfloor}\Big\rceil+1$.
  Then
  $$
  \FAI(f) \leq u\lfloor{\sqrt n}\rfloor+
  2\Big\lceil\frac n{\lfloor{\sqrt n}\rfloor}\Big\rceil - 1
  $$
  where $u$ is the number of runs of $1$ in the binary representation
  of $d$.

\end{remark}
Note that $\AN^c(f)$ contains all $n$-variable Boolean functions $g$
such that $1\leq\deg(g)<\AI(f)$ because any $g$ of algebraic degree
less than $\AI(f)$ cannot be an annihilator of $f$. Hence
\begin{displaymath}
  \min_{1\leq\deg(g)<\AI(f)}\left(\deg(g)+\deg(f\cdot g)\right)\geq\FAI(f).
\end{displaymath}
Furthermore, $\AN^c(f)$ may contain Boolean function of algebraic
degree greater than or equal to $\AI(f)$, that is,
$\{g:\V n\rightarrow\V{}\mid 1\leq\deg(g)<\AI(f)\}$ is strictly
contained in $\AN^c(f)$. However, (\ref{AI:589507}) is less than
$\FAI(f)$ if and only only if $2\AI(f)<\FAI(f)$. We are now going to
show that the $\FAI$ of a Boolean function or its complement is
necessarily less than or equal to (\ref{AI:589507}). To this end, we
first show that the $\FAI$ of a function can be bounded from above and
below with the lowest degree of the non-zero annihilators of its
complement.

\begin{proposition}\label{thm:borne_FAI_grossiere}
  Let $f$ be a non-zero $n$-variable Boolean function. Then
  $$ \LDA(1+f)+1 \leq\FAI(f)\leq 2\LDA(1+f).$$
\end{proposition}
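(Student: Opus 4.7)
The plan rests on the elementary but crucial identification: a Boolean function $g$ is an annihilator of $1+f$ precisely when $f\cdot g = g$, because $(1+f)\cdot g = g + f\cdot g$. So $\LDA(1+f)$ is the minimum degree of a non-zero function $g$ with $f\cdot g = g$. I would use this characterization to relate products of the form $f\cdot g$ to annihilators of $1+f$ via the idempotency $f^2 = f$ in $\B_n$.

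For the upper bound $\FAI(f) \leq 2\,\LDA(1+f)$, I would take $g_0$ a non-zero annihilator of $1+f$ of minimum degree $\LDA(1+f)$. By the characterization, $f\cdot g_0 = g_0$, so in particular $f\cdot g_0 \neq 0$, meaning $g_0 \in \AN^c(f)$. Since $f$ is a non-zero function not identically $1$ (a degenerate case one excludes as in the discussion preceding the statement), $g_0$ is not the constant $1$, so it is an admissible candidate in Definition~\ref{dfi:immunite_algebrique}. Hence
\[
\FAI(f)\ \leq\ \deg(g_0) + \deg(f\cdot g_0)\ =\ 2\deg(g_0)\ =\ 2\,\LDA(1+f).
\]

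For the lower bound $\FAI(f) \geq \LDA(1+f) + 1$, I would take $g \in \AN^c(f)\setminus\{1\}$ achieving $\FAI(f)$ and set $h := f\cdot g$. Using $f^2 = f$, I compute $(1+f)\cdot h = h + f\cdot h = f\cdot g + f^2\cdot g = 0$, so $h$ is an annihilator of $1+f$, and $h\neq 0$ since $g \in \AN^c(f)$. Therefore $\deg(f\cdot g) = \deg(h) \geq \LDA(1+f)$. On the other hand, since $g$ is a Boolean function distinct from both $0$ and the constant $1$, it is non-constant, so $\deg(g)\geq 1$. Summing,
\[
\FAI(f)\ =\ \deg(g) + \deg(f\cdot g)\ \geq\ 1 + \LDA(1+f).
\]

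Honestly there is no real obstacle here: the entire argument is powered by the single observation that multiplication by $f$ is idempotent, which forces $f\cdot g$ to annihilate $1+f$ for every $g$. The only mild care needed is to ensure that the witnesses produced for the upper bound (respectively, chosen for the lower bound) are legitimate elements of the set $\AN^c(f)\setminus\{1\}$ over which $\FAI$ is minimized, which amounts to excluding the trivial constant cases already set aside by the hypothesis that $f$ is non-zero (and, implicitly, non-constant).
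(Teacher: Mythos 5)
Your proof is correct and follows essentially the same route as the paper's: both bounds rest on the observation that $g\in\AN(1+f)$ iff $f\cdot g=g$, and that $f\cdot g$ is a non-zero annihilator of $1+f$ whenever $f\cdot g\neq 0$. Your explicit attention to excluding the constants $0$ and $1$ as witnesses (and the degenerate case $f\equiv 1$) is slightly more careful than the paper's write-up, but it is the same argument.
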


\begin{proof}
  Note that $g\not\in\AN^c(f)$ says only that $f\cdot g\not=0$.
  Hence, $\AN^c(f)$ contains all the non-zero annihilators of $1+f$
  since $f\cdot g=g$ for every $g\in\AN(1+f)$.  Now, if $g$ is an
  non-zero annihilator of $1+f$ then $f \cdot g=g$ proving that
  $\FAI(f)\leq 2\deg(g)$ from which we deduce that
  $\FAI(f)\leq 2\LDA(1+f)$.

  On the other hand, observe that, if $f\cdot g\not=0$, then
  $f\cdot g$ is an non-zero annihilator of $1+f$. Thus
  $\deg(g)+\deg(f\cdot g)\geq  1+\LDA(1+f)$. \qed
\end{proof}

\begin{remark}\label{rem:borne_inf_tendue}
  The lower bound and the upper bound in Proposition \ref{thm:borne_FAI_grossiere}
  are achieved. Indeed, let $f$ be an $n$-variable Boolean function
  whose support strictly contains the support of an affine Boolean
  function $l$. Then $fl=l$ which implies that $\LDA(1+f)=1$ since
  $(1+f)\cdot l=0$. Thus
  $\deg(l)+\deg(f\cdot l)= 2\deg(l)=2=\LDA(1+f)+1=2\LDA(1+f)$.
\end{remark}

Proposition \ref{thm:borne_FAI_grossiere} says that, for any
$n$-variable Boolean function, $\FAI(f)\leq 2\LDA(1+f)$. Thus, if
$\LDA(1+f)>\LDA(f)$, $\FAI(1+f)\leq 2\LDA(f)=2\AI(f)$ while
$\FAI(f)\leq 2\AI(f)=2\LDA(1+f)$ if $\LDA(1+f)\leq\LDA(f)$.
Summarizing:
\begin{corollary}\label{cor:FAI<=2AI}
  Let $f$ be an $n$-variable Boolean function. Then
  \begin{displaymath}
    \min(\FAI(f),\FAI(1+f))\leq 2\AI(f).
  \end{displaymath}
\end{corollary}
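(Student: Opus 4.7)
The plan is to derive the corollary as an almost immediate consequence of Proposition \ref{thm:borne_FAI_grossiere} applied to both $f$ and its complement $1+f$, together with the definition $\AI(f) = \min(\LDA(f), \LDA(1+f))$.

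First, I would recall that by the definition of algebraic immunity given in the preliminaries, $\AI(f) = \min\bigl(\LDA(f), \LDA(1+f)\bigr)$, and also that $\AI(1+f) = \AI(f)$. So proving an upper bound of the form $2\AI(f)$ amounts to producing $2\LDA(f)$ on one side and $2\LDA(1+f)$ on the other, then taking the minimum.

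Second, I would invoke Proposition \ref{thm:borne_FAI_grossiere} twice. Applied to $f$ (assumed non-zero) the proposition gives
\begin{equation*}
\FAI(f) \;\leq\; 2\LDA(1+f).
\end{equation*}
Applied to $1+f$ (which is also non-zero, since otherwise $\FAI$ is not defined and the statement is vacuous) the same proposition yields
\begin{equation*}
\FAI(1+f) \;\leq\; 2\LDA\bigl(1+(1+f)\bigr) \;=\; 2\LDA(f).
\end{equation*}

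Third, taking the minimum of these two inequalities gives
\begin{equation*}
\min\bigl(\FAI(f), \FAI(1+f)\bigr) \;\leq\; \min\bigl(2\LDA(1+f),\, 2\LDA(f)\bigr) \;=\; 2\min\bigl(\LDA(f), \LDA(1+f)\bigr) \;=\; 2\AI(f),
\end{equation*}
which is exactly the claim. There is no real obstacle here: the content of the corollary lives entirely in Proposition \ref{thm:borne_FAI_grossiere}, and the only thing to notice is that the upper bound $2\LDA(1+f)$ in that proposition is symmetric in $f \leftrightarrow 1+f$, so applying it on both sides produces the two halves of $2\AI(f)$. The one minor care point is to treat the constant functions separately (they are excluded from the definition of $\FAI$, or fall under the degenerate case), but this does not affect the stated inequality.
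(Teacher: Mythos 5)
Your proposal is correct and matches the paper's own argument: the paper likewise applies Proposition \ref{thm:borne_FAI_grossiere} to both $f$ and $1+f$ to get $\FAI(f)\leq 2\LDA(1+f)$ and $\FAI(1+f)\leq 2\LDA(f)$, then concludes via $\AI(f)=\min(\LDA(f),\LDA(1+f))$ (the paper phrases this as a two-case comparison of $\LDA(f)$ and $\LDA(1+f)$, while you take the minimum directly, which is the same computation).
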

Based on this observation, an extension of fast algebraic immunity is
given by Definition \ref{dfi:immunite_algebrique1}.
\begin{definition}\label{dfi:immunite_algebrique1}
  Let $f$ be an $n$-variable Boolean function. The $\FAA$ of $f$ is
  the minimum value between $\FAI(f)$ and $\FAI(1+f)$:
  \begin{displaymath}
    \FAA(f) = \min(\FAI(f),\FAI(1+f)).
  \end{displaymath}
\end{definition}

A direct consequence of Proposition \ref{thm:borne_FAI_grossiere} is
then that the $\FAA$ of an $n$-variable function is less than or equal
to (\ref{AI:589507}). But above, one deduces from Proposition
\ref{thm:borne_FAI_grossiere}.

\begin{proposition}
  Let $f$ be an $n$-variable Boolean function. Then
  \begin{displaymath}
   \min \left (  \LDA(f)+1, \LDA(1+f)+1  \right ) \leq\FAA(f)\leq2\AI(f).
  \end{displaymath}
\end{proposition}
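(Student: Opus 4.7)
The plan is to derive both inequalities by applying Proposition \ref{thm:borne_FAI_grossiere} to $f$ and, separately, to its complement $1+f$, and then take the minimum, invoking the definition of $\FAA$.

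For the upper bound, Proposition \ref{thm:borne_FAI_grossiere} applied directly to $f$ yields $\FAI(f) \leq 2\LDA(1+f)$. Substituting $1+f$ for $f$ in the same proposition, and using that $1+(1+f)=f$, I obtain $\FAI(1+f) \leq 2\LDA(f)$. Taking the minimum of these two estimates gives
$$\FAA(f) = \min(\FAI(f),\FAI(1+f)) \leq 2\min(\LDA(f),\LDA(1+f)) = 2\AI(f),$$
where the last equality is simply the definition of $\AI$. This recovers Corollary \ref{cor:FAI<=2AI}.

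For the lower bound, the same proposition supplies $\FAI(f) \geq \LDA(1+f)+1$ and, applied to $1+f$, $\FAI(1+f) \geq \LDA(f)+1$. Taking the minimum of the two lower bounds yields
$$\FAA(f) \geq \min(\LDA(1+f)+1,\ \LDA(f)+1),$$
which is the desired lower inequality.

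The result is essentially a packaging of Proposition \ref{thm:borne_FAI_grossiere} together with the symmetric definition of $\FAA$; there is no genuine obstacle beyond the bookkeeping of which $\LDA$-term gets paired with which $\FAI$-term after the substitution $f \mapsto 1+f$.
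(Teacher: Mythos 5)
Your proof is correct and matches the paper's intent exactly: the paper states this proposition as an immediate deduction from Proposition \ref{thm:borne_FAI_grossiere}, and your argument—applying that proposition to both $f$ and $1+f$ and taking minima on each side—is precisely the omitted bookkeeping. (The only pedantic caveat, shared with the paper itself, is that Proposition \ref{thm:borne_FAI_grossiere} is stated for non-zero functions, so the constant functions $0$ and $1$ are implicitly excluded.)
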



Another property of $\FAI$ is that it is invariant under affine
transformations like the standard algebraic immunity.
\begin{proposition}\label{prop:affine_invariance}
  Let $f$ be an $n$-variable Boolean function and $A$ an automorphism
  of $\V n$. Then $\FAI(f\circ A)=\FAI(f)$.
\end{proposition}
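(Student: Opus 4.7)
The plan is to exhibit an explicit bijection on the candidate set $\{g \in \AN^c(f) : g \not\equiv 1\}$ that preserves the quantity $\deg(g) + \deg(f \cdot g)$, and then observe that the minima must coincide.

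First I would fix an affine automorphism $A$ of $\V n$ and consider the map $\Phi_A \colon \B_n \to \B_n$ defined by $\Phi_A(g) = g \circ A$. Since $A$ is bijective, $\Phi_A$ is a bijection on $\B_n$; and since $A$ is affine, $\Phi_A$ preserves algebraic degree. (This is the same fact already invoked in the preliminaries to state that $\AI(f) = \AI(f \circ A)$.) In particular $g \not\equiv 1$ if and only if $g \circ A \not\equiv 1$, because a constant function is mapped to itself by $\Phi_A$.

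Next I would observe the pointwise identity
\begin{equation*}
(f \circ A)(x) \cdot (g \circ A)(x) = f(A(x)) \cdot g(A(x)) = (f \cdot g)(A(x)) = \bigl((f \cdot g) \circ A\bigr)(x),
\end{equation*}
so that $(f \circ A) \cdot (g \circ A) = (f \cdot g) \circ A$. Combined with the degree-preservation of $\Phi_A$, this gives both
\begin{equation*}
\deg(g \circ A) = \deg(g), \qquad \deg\bigl((f \circ A) \cdot (g \circ A)\bigr) = \deg(f \cdot g),
\end{equation*}
and also $f \cdot g \equiv 0$ if and only if $(f \circ A) \cdot (g \circ A) \equiv 0$. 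Hence $\Phi_A$ restricts to a bijection between $\{g \in \AN^c(f) : g \not\equiv 1\}$ and $\{g' \in \AN^c(f \circ A) : g' \not\equiv 1\}$, and it preserves the objective $\deg(g) + \deg(f \cdot g)$.

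Taking the infimum of this objective over each set therefore gives the same value, which is exactly the content of $\FAI(f \circ A) = \FAI(f)$. There is no real obstacle here: the only point one has to be careful about is to check that the excluded case $g \equiv 1$ is respected by the bijection (it is, because $\Phi_A$ fixes constants), so the statement drops out.
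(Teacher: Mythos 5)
Your proof is correct and follows essentially the same route as the paper: both arguments use the fact that $g \mapsto g \circ A$ is a degree-preserving bijection satisfying $(f \circ A)\cdot(g \circ A) = (f\cdot g)\circ A$, hence mapping $\AN^c(f)$ onto $\AN^c(f\circ A)$ while preserving the objective $\deg(g)+\deg(f\cdot g)$. Your explicit check that the excluded constant $g \equiv 1$ is respected by the bijection is a small point the paper leaves implicit, but otherwise the two proofs coincide.
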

\begin{proof}
  Note that
  $\{\deg(g)+\deg(f\cdot g)\mid g\in\AN^c(f)\}=\{\deg(g\circ
  A)+\deg(f\circ A\cdot g\circ A)\mid g\in\AN^c(f)\}$.
  Observe now, that $f\cdot g\not=0$ if and only if
  $f\circ A\cdot g\circ A \not=0$, that is, $g\in\AN^c(f)$ if and only
  if $g\circ A\not\in\AN^c(f\circ A)$. Thus $\FAI(f)=\FAI(f\circ A)$. \qed
\end{proof}

One can extend Proposition \ref{prop:affine_invariance} to $\FAA$.
\begin{proposition}
  Let $f$ be an $n$-variable Boolean function and $A$ an automorphism
  of $\V n$. Then $\FAA(f\circ A)=\FAA(f)$.
\end{proposition}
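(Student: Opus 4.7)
The plan is to reduce the claim about $\FAA$ directly to the already-established affine invariance of $\FAI$ (Proposition \ref{prop:affine_invariance}), using the fact that complementation commutes with composition by $A$.

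First, I would unfold the definition: $\FAA(f\circ A)=\min(\FAI(f\circ A),\FAI(1+f\circ A))$. The first term is immediately handled by Proposition \ref{prop:affine_invariance}, which gives $\FAI(f\circ A)=\FAI(f)$.

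Next, I would observe that $1+f\circ A=(1+f)\circ A$, because the constant function $1$ is invariant under precomposition by any map (evaluating both sides at any $x\in\V n$ gives $1+f(A(x))$). Applying Proposition \ref{prop:affine_invariance} to the $n$-variable Boolean function $1+f$ with the automorphism $A$ then yields $\FAI((1+f)\circ A)=\FAI(1+f)$, hence $\FAI(1+f\circ A)=\FAI(1+f)$.

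Combining these two equalities gives $\FAA(f\circ A)=\min(\FAI(f),\FAI(1+f))=\FAA(f)$. There is no real obstacle here; the only step worth stating carefully is the identity $1+f\circ A=(1+f)\circ A$, which makes the reduction to Proposition \ref{prop:affine_invariance} clean.
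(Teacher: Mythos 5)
Your proposal is correct and is exactly the argument the paper intends: the paper omits the proof, simply remarking that Proposition \ref{prop:affine_invariance} ``extends'' to $\FAA$, and your reduction via the identity $1+f\circ A=(1+f)\circ A$ is the natural way to carry out that extension. No gaps.
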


\subsection{Fast immunity profile}
\label{sec:fast-immun-prof}

Set $\MUL_k(f)=\{f\cdot g\mid \deg(g)\leq k\}$ and
$\mu_k(f)=\mindeg\MUL_k(f)$, where $\mindeg$ denotes the minimum degree
of the non-zeros elements of the set. Clearly
$(\mu_k(f)_{1\leq k\leq n}$ is a non-increasing sequence of integers. We
shall call it the \emph{fast immunity profile} of $f$.  Note
that
$\MUL_k(f\circ A)=\{f\circ A\cdot g\mid \deg(g\circ A)\leq k\}
=\{f\cdot h\mid \deg(h)\leq k\}\circ A=\MUL_k(f)\circ A$
for every affine automorphism of $\V n$, proving that
\begin{lemma}\label{profil_invariant}
  Let $f$ be an $n$-variable Boolean function and $A$ an automorphism
  of $\V n$. Then $\mu_k(f)=\mu_k(f\circ A)$ for every
  $1\leq k\leq n$.
\end{lemma}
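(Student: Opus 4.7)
The plan is to verify directly the set equality $\MUL_k(f\circ A) = \MUL_k(f) \circ A$ already sketched in the paragraph preceding the lemma, and then appeal to the classical fact that composition with an affine automorphism of $\V n$ preserves the algebraic degree of any Boolean function. Since both sides then carry the same multiset of degrees of non-zero elements, they share the same minimum, which is exactly $\mu_k$.

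First, I would prove the set equality carefully. Take $g \in \B_n$ with $\deg(g) \leq k$ and set $h = g \circ A^{-1}$. Since $\varphi \mapsto \varphi \circ A^{-1}$ preserves algebraic degree, $\deg(h) \leq k$. Pointwise multiplication commutes with precomposition, so
\begin{displaymath}
(f \circ A)\cdot g \;=\; (f\circ A)\cdot (h\circ A) \;=\; (f \cdot h) \circ A,
\end{displaymath}
which shows that $\MUL_k(f \circ A) \subseteq \MUL_k(f) \circ A$. The reverse inclusion follows by the symmetric argument with $A$ in place of $A^{-1}$.

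Second, $\varphi \mapsto \varphi \circ A$ is a degree-preserving bijection of $\B_n$ that sends non-zero functions to non-zero functions. In particular, the set of degrees of the non-zero elements of $\MUL_k(f) \circ A$ coincides with that of $\MUL_k(f)$. Hence
\begin{displaymath}
\mu_k(f \circ A) \;=\; \mindeg \bigl(\MUL_k(f \circ A)\bigr) \;=\; \mindeg \bigl(\MUL_k(f) \circ A\bigr) \;=\; \mindeg \MUL_k(f) \;=\; \mu_k(f).
\end{displaymath}

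There is no real obstacle here: the only point that deserves care is the invariance of algebraic degree under affine composition, but this is standard and has already been used implicitly in the proof of Proposition \ref{prop:affine_invariance}.
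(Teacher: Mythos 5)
Your proof is correct and follows essentially the same route as the paper, which establishes the lemma via the set identity $\MUL_k(f\circ A)=\MUL_k(f)\circ A$ stated in the paragraph preceding it and then reads off the equality of minimum degrees. You simply make explicit the substitution $h=g\circ A^{-1}$ and the degree-preservation of affine composition, which the paper leaves implicit.
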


We now show that the algebraic immunity and the fast
algebraic immunity  of $f$ can be expressed by means of the
immunity profile. We first recall the relationship between the annihilators of a function $f$ and the multiples of $f+1$.

\begin{proposition}\label{LDAMUL}
  Let $f$ be an $n$-variable Boolean function. Then

  $$\LDA(f) = \min_{1\leq k\leq n}\mu_k(f+1).$$

  Furthermore, if $k\geq LDA(f)$, $\mu_k(f+1)=\LDA(f)$.
\end{proposition}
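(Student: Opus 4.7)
The plan is to prove both statements by exploiting the simple algebraic identity $f\cdot(f+1)=0$ over $\F_2$, which links annihilators of $f$ with multiples of $f+1$.

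First I would show the inequality $\mu_k(f+1)\ge \LDA(f)$ for every $k\ge 1$. Pick any non-zero $h\in\MUL_k(f+1)$, so $h=(f+1)\cdot g$ for some $g$ with $\deg(g)\le k$. Multiplying by $f$ yields $f\cdot h = f(f+1)\cdot g = 0$, hence $h$ is a non-zero annihilator of $f$. Therefore $\deg(h)\ge \LDA(f)$, and taking the minimum over $h$ gives $\mu_k(f+1)\ge \LDA(f)$. This immediately implies $\min_{1\le k\le n}\mu_k(f+1)\ge \LDA(f)$.

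Next I would establish the reverse inequality when $k\ge \LDA(f)$. Pick a non-zero annihilator $g_0$ of $f$ with $\deg(g_0)=\LDA(f)$. Since $f\cdot g_0=0$, the support of $g_0$ is contained in $\{x : f(x)=0\}=\mathrm{supp}(f+1)$, so $(f+1)\cdot g_0 = g_0$. Thus $g_0\in\MUL_k(f+1)$ as soon as $k\ge \deg(g_0)=\LDA(f)$, which yields $\mu_k(f+1)\le \deg(g_0)=\LDA(f)$. Combined with the first step, $\mu_k(f+1)=\LDA(f)$ for all $k\ge \LDA(f)$, proving the second assertion and also showing that the minimum in the first assertion is actually attained.

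Finally, the two inequalities together give the equality $\LDA(f)=\min_{1\le k\le n}\mu_k(f+1)$, completing the proof. There is no real obstacle here: the argument rests entirely on the identity $f(f+1)=0$ and on the observation that an annihilator of $f$ is already a fixed point of multiplication by $f+1$. The only care needed is to insist on \emph{non-zero} multiples when taking $\mindeg$, so that one compares degrees of bona fide annihilators on both sides.
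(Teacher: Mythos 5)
Your proof is correct and follows essentially the same route as the paper's: the lower bound comes from observing that every nonzero element of $\MUL_k(f+1)$ is an annihilator of $f$ (via $f\cdot(f+1)=0$), and the upper bound from noting that a minimum-degree annihilator $g_0$ of $f$ satisfies $(1+f)\cdot g_0=g_0$. The only cosmetic difference is that you place $g_0$ directly in $\MUL_k(f+1)$ for every $k\geq\LDA(f)$, while the paper does so for $k=\LDA(f)$ and then invokes the monotonicity of the sequence $(\mu_k(f+1))_k$; the two are interchangeable.
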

\begin{proof}
  For any integer $k$ ranging from $1$ to $n$, we have
  $\mu_k(1+f)=\mindeg\MUL_k(1+f)\geq\LDA(f)$ since every nonzero
  element of $\MUL_k(1+f)$ is a non-zero annihilator of $f$. It follows that
   $\min_{1\leq k\leq n}\mu_k(f+1)\geq \LDA(f)$.

  Conversely, let $g$ be an annihilator of $f$ of algebraic degree
  $\LDA(f)$. Then one has $(1+f)\cdot g=g$ and thus
  $\mu_{\LDA(f)}(1+f)\leq\LDA(f)$ implying that
  $\min_{1\leq k\leq n}\mu_k(f+1)\leq \LDA(f)$.
  Consequently, $\LDA(f) = \min_{1\leq k\leq n}\mu_k(f+1).$

  Furthermore, note that $(\mu_k(1+f))_{1\leq k\leq n}$ is a
  nonincreasing sequence of integers. Hence, since
  $\mu_{\LDA(f)}(1+f)\leq\LDA(f)$, one has necessarily
  $\mu_{k}(1+f)=\LDA(f)$ when $k\geq\LDA(f)$. \qed
\end{proof}

Recalling that
$\AI(f)=\min(\LDA(f),\LDA(1+f))$, we deduce:

\begin{proposition}\label{AIMUL}
  Let $f$ be an $n$-variable Boolean function. Then
  $$\AI(f)=\min(\min_{1\leq k\leq n}\mu_k(f+1),\min_{1\leq k\leq n}\mu_k(f))$$

\end{proposition}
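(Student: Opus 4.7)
The plan is very short: this proposition is essentially a direct corollary of Proposition \ref{LDAMUL} combined with the identity $\AI(f)=\min(\LDA(f),\LDA(1+f))$ that was recalled in the preliminaries.

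First I would apply Proposition \ref{LDAMUL} as stated: it gives $\LDA(f)=\min_{1\leq k\leq n}\mu_k(f+1)$. Then I would apply the same proposition with $f$ replaced by $1+f$; since $1+(1+f)=f$, this yields $\LDA(1+f)=\min_{1\leq k\leq n}\mu_k(f)$. (Here one only needs to check that Proposition \ref{LDAMUL} was proved for an arbitrary $n$-variable Boolean function, which it was; no additional hypothesis on $f$ is required.)

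Finally, I would take the minimum of the two equalities, using the definition $\AI(f)=\min(\LDA(f),\LDA(1+f))$, to conclude
\[
\AI(f)=\min\!\left(\min_{1\leq k\leq n}\mu_k(f+1),\ \min_{1\leq k\leq n}\mu_k(f)\right).
\]
There is no real obstacle here; the only subtlety worth pointing out is that the two inner minima need not coincide in general, so we cannot simplify the expression further without additional information on $f$. The content of the proposition is purely that the algebraic immunity is readable from the fast immunity profiles of $f$ and $1+f$, which is immediate once Proposition \ref{LDAMUL} is in hand.
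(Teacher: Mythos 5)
Your proof is correct and is exactly the argument the paper intends: it derives the proposition from Proposition~\ref{LDAMUL} applied to $f$ and to $1+f$, together with the identity $\AI(f)=\min(\LDA(f),\LDA(1+f))$. No further comment is needed.
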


\begin{proposition}\label{FAIMUL}
  Let $f$ be an $n$-variable Boolean function. Then
  $$\FAI(f)=\min_{1 \leq k\leq n}(k+\mu_k(f)).$$
\end{proposition}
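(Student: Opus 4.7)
The plan is to prove the equality by showing the two opposite inequalities, each extracted directly from the definitions of $\FAI(f)$ and $\mu_k(f)$. For the direction $\FAI(f)\geq\min_{1\leq k\leq n}(k+\mu_k(f))$, I would start with a Boolean function $g$ that achieves $\FAI(f)$, i.e.\ $g\not\equiv 1$, $g\in\AN^c(f)$ and $\deg(g)+\deg(f\cdot g)=\FAI(f)$. Setting $k:=\deg(g)$, one has $1\leq k\leq n$ because $g$ is neither the zero function (otherwise $f\cdot g=0$) nor the constant $1$. Since $f\cdot g$ is a nonzero element of $\MUL_k(f)$, the definition of $\mu_k(f)$ gives $\mu_k(f)\leq\deg(f\cdot g)$, and therefore $k+\mu_k(f)\leq\deg(g)+\deg(f\cdot g)=\FAI(f)$.

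For the reverse inequality, I would fix $k\in\{1,\dots,n\}$ and pick a witness $g$ with $\deg(g)\leq k$, $f\cdot g\neq 0$ and $\deg(f\cdot g)=\mu_k(f)$. When $g\not\equiv 1$, this $g$ lies in the set over which $\FAI(f)$ is minimized, so $\FAI(f)\leq\deg(g)+\deg(f\cdot g)\leq k+\mu_k(f)$; taking the minimum over $k$ then yields the claimed upper bound on $\FAI(f)$.

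The main point I expect to handle carefully is that the witness of $\mu_k(f)$ in the second step can always be chosen non-constant, since the constant $g\equiv 1$ always sits in $\MUL_k(f)$ and realizes $\deg(f\cdot g)=\deg(f)$. This is automatic whenever $\mu_k(f)<\deg(f)$, as $g\equiv 1$ is then not a witness at all. In the remaining case $\mu_k(f)=\deg(f)$ one replaces $g\equiv 1$ by a non-constant witness with the same value $\deg(f\cdot g)=\deg(f)$---for instance $g=f$ when $\deg(f)\leq k$ (using $f\cdot f=f$), or a suitable monomial preserving a top-degree term of $f$ otherwise---so that the upper-bound argument goes through for every $k$, and the two inequalities combine to give $\FAI(f)=\min_{1\leq k\leq n}(k+\mu_k(f))$.
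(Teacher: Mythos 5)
Your two inequalities are set up exactly as in the paper's proof, and your first direction is complete and identical to the paper's. You are also right to flag the one point the paper glosses over: in the second direction one needs a \emph{non-constant} witness for $\mu_k(f)$, whereas $\MUL_k(f)=\{f\cdot g \mid \deg(g)\le k\}$ as defined also contains $f\cdot 1=f$ (the paper simply posits a witness $h$ with $\deg(h)=j$ exactly, which is the same unjustified step). The problem is that your patch for the case $\mu_k(f)=\deg(f)$ with $\deg(f)>k$ does not work. Multiplying $f$ by a monomial $x_i$ preserves only those top-degree monomials of $f$ that contain $x_i$ and pushes the others up to degree $\deg(f)+1$, so in general no monomial --- indeed no non-constant $g$ of degree at most $k$ --- realizes $\deg(f\cdot g)=\deg(f)$. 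Concretely, take $n=4$, $f=x_1x_2+x_3x_4$ and $k=1$: for every non-constant affine $g$ one has $f\cdot g\neq 0$ and $\deg(f\cdot g)=3$, because the degree-$3$ part of $f\cdot\sum_{i\in S}x_i$ equals $\sum_{i\in S\cap\{3,4\}}x_1x_2x_i+\sum_{i\in S\cap\{1,2\}}x_ix_3x_4$, which is nonzero for every nonempty $S$, and adding the constant $1$ to $g$ does not change this part. Hence the only witness of $\mu_1(f)=2$ is $g\equiv 1$, and no non-constant replacement exists.

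This is not merely a hole in the argument: for this $f$ the statement itself fails under the literal definition of $\MUL_k$. Indeed $\mu_1(f)=2$ gives $\min_{1\le k\le 4}(k+\mu_k(f))=3$, while $\FAI(f)=4$: no non-constant affine $g$ yields $\deg(f\cdot g)=2$, and $f$ has no nonzero multiple of degree at most $1$ since $|supp(f)|=6<8$, so the minimum $4$ is attained only from degree $2$ onward (e.g.\ by $g=f$). Consequently no local repair of your third paragraph can succeed. The proposition is correct only if one reads $\MUL_k(f)$ as $\{f\cdot g\mid 1\le\deg(g)\le k\}$, i.e.\ with constant $g$ excluded; under that convention every witness of $\mu_k(f)$ is automatically admissible in the definition of $\FAI(f)$, your first two paragraphs already constitute a complete proof, and the case analysis of your last paragraph becomes unnecessary. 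The right move is to state that convention explicitly (checking that Propositions~\ref{LDAMUL} and~\ref{AIMUL} survive it) rather than to try to manufacture a non-constant witness.
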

\begin{proof}
  Let $1\leq k\leq n$ be arbitrary.  By definition, $\mu_k(f)$ is the
  lowest algebraic degree of all nonzero elements of
  $\MUL_k(f)$. Thus, for $g \not =0$, $\deg(g)=k$, $f\cdot g\not=0$,
  one has $\deg(f\cdot g)+\deg(g)\geq\mu_k(f)+k$.  Hence, one gets
  that $\FAI(f)\geq\min_{1\leq k\leq n}(k+\mu_k(f))$.  Conversely, let
  $1\leq j\leq n$ be such that
  $j+\mu_j(f)=\min_{1\leq k\leq n}(k+\mu_k(f))$.  Let $h$ be a
  function with $\deg(h)=j$ achieving $\mu_j(f)$ (such that
  $\mu_j(f)=\deg(f\cdot h)$ and $f\cdot h\not=0$).  Then
  $\deg(h)+\deg(f\cdot h)=j+\mu_j(f)\geq \FAI(f)$. \qed
\end{proof}

\section{Fast algebraic immunity, approximation and concatenation of functions}
\subsection{Fast algebraic immunity and approximation of functions}
\label{sec:fast-immun-algebr}

In \cite{Zhang2006} the algebraic complement of a Boolean function and  its algebraic immunity have been studied.

\begin{definition}
  Given a Boolean function $f$ defined on $\V n$, the algebraic
  complement of $f$, denoted by $f^c$, is the function that contains
  all the monomials that are not in the algebraic normal form of $f$.
\end{definition}

In \cite[Theorem 2]{Zhang2006}, the authors have shown that the
algebraic immunities of a Boolean function and its algebraic
complement are close:
\begin{displaymath}
  \vert AI(f) - \AI(f^c) \vert \leq 1.
\end{displaymath}

Let us denote by $\delta_0$ the indicator of the singleton $\{0\}$. It is well-known and easily checked that the algebraic normal form of $\delta_0$ equals $\sum_{I\subseteq \{1,\ldots ,n\}}
\prod_{i\in I}{x_i}$. The algebraic complement of a function $f$ is then the function $f+\delta_0$. Since the algebraic immunity is invariant under affine transformations, there is no reason to privilege $\delta_0$ rather than any other indicator of a singleton $\delta_a$ (except that the ANF of the algebraic complement is nicely simple). Moreover, functions $f+ \delta_a$, $a\in \V n$, are all functions at Hamming distance 1 from $f$, and it seems natural to consider more generally functions at low Hamming distance from $f$. A nice observation has been made in \cite{JohanssonWang2010}: if a function is at low Hamming distance from a low algebraic degree function, then it is weak against fast algebraic attacks. We show now that if a function is at low Hamming distance from a low algebraic immunity function, then it is weak against (standard) algebraic attacks:

\begin{proposition}let $k$ and $d$ be two positive integers. Let $f$ be any $n$-variable Boolean function such that $AI(f)=k$. Let $\delta$ be any Boolean function such that $w_H(\delta)< \min(2^{n-k},2^{d+1}-1)$. Then: $$|AI(f+\delta)-AI(f)|\leq d.$$
\end{proposition}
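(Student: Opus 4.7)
The plan is to establish both $AI(f+\delta)\le AI(f)+d$ and $AI(f+\delta)\ge AI(f)-d$ using a single construction, applied once directly to $f$ and once to $f+\delta$ (using that $f=(f+\delta)+\delta$ is perturbed by the same $\delta$). I would focus on the upper bound; the lower bound then follows by a short case analysis on whether $AI(f+\delta)\le AI(f)$ or not.

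For the upper bound, set $k=AI(f)$ and pick a non-zero $g\in\B_n$ of degree $k$ annihilating $f$ or $1+f$; suppose $f\cdot g=0$ (the other case is symmetric). By the Reed-Muller bound, every non-zero Boolean function of degree $\le k$ has weight at least $2^{n-k}$, so $|\mathrm{supp}(g)|\ge 2^{n-k}>w_H(\delta)$, and I may fix a point $a\in\mathrm{supp}(g)\setminus\mathrm{supp}(\delta)$. The key step is then to produce a non-zero $h\in\B_n$ of degree $\le d$ with $h(a)=1$ and $h\equiv 0$ on $\mathrm{supp}(\delta)$.

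Setting $T=\mathrm{supp}(\delta)\cup\{a\}$, I have $|T|\le 2^{d+1}-1$, strictly less than the minimum distance $2^{d+1}$ of $\mathrm{RM}(n-d-1,n)=\mathrm{RM}(d,n)^{\perp}$ (Theorem~\ref{thm:RM}). This forces the evaluation map $\mathrm{RM}(d,n)\to\F^{T}$ to be surjective: any non-zero vector $v\in\F^{T}$ in its cokernel, extended by zeros outside $T$, would be a non-zero codeword of $\mathrm{RM}(d,n)^{\perp}$ supported in $T$, contradicting that the code has minimum distance strictly larger than $|T|$. Picking $h$ of degree $\le d$ mapping to the indicator of $\{a\}$ in $\F^{T}$, I get $h\cdot\delta=0$ and $h(a)g(a)=1$, so $hg\ne 0$, while
\begin{equation*}
(f+\delta)(hg)=h(fg)+(h\delta)g=0.
\end{equation*}
Hence $hg$ is a non-zero annihilator of $f+\delta$ of degree at most $k+d$, yielding $AI(f+\delta)\le k+d$.

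For the lower bound, put $k'=AI(f+\delta)$. If $k'\le k$ then $w_H(\delta)<2^{n-k}\le 2^{n-k'}$ and $w_H(\delta)<2^{d+1}-1$, so the same construction applied to $f+\delta$ with perturbation $\delta$ yields $AI(f)\le k'+d$, i.e.\ $AI(f+\delta)\ge AI(f)-d$; if $k'>k$ the bound is immediate. The main obstacle is exactly the existence of the auxiliary function $h$: one must convert the combinatorial weight hypothesis $w_H(\delta)<2^{d+1}-1$ into an interpolation statement for degree-$\le d$ Boolean functions, and the natural route is via the dual minimum distance of Reed-Muller codes, as above. The role of the two-term minimum in the hypothesis is also clarified by this outline: the first term $2^{n-k}$ is what allows one to choose the point $a$, while the second term $2^{d+1}-1$ is what powers the Reed-Muller interpolation.
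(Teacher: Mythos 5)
Your proposal is correct and follows essentially the same route as the paper: choose a minimum-degree annihilator $g$ of $f$ or $1+f$, pick a point of $\mathrm{supp}(g)$ outside $\mathrm{supp}(\delta)$ via the weight bound $2^{n-k}$, and interpolate a degree-$\le d$ function $h$ vanishing on $\mathrm{supp}(\delta)$ but not at that point by showing the evaluation map from $\mathrm{RM}(d,n)$ is surjective, using the minimum distance $2^{d+1}$ of the dual code $\mathrm{RM}(n-d-1,n)$. Your explicit case analysis for the reverse inequality (checking that the hypothesis on $w_H(\delta)$ still holds when the roles of $f$ and $f+\delta$ are swapped) is in fact slightly more careful than the paper's one-line appeal to symmetry.
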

\begin{proof}
There exists by hypothesis a nonzero annihilator $g$ of $f$ or of $f+1$ whose algebraic degree is $k$.
Let $a_0$ be any element such that $\delta(a_0)=0$ and $g(a_0)=1$.  Such an element exists because the Hamming weight of $g$ is larger than or equal to $2^{n-k}$.

Write $\mathrm{Supp}(\delta)= \left \{ x \in \mathbb F_2^n: \delta (x)=1 \right \}$ and $w=w_H(\delta)$. Let the points
of $\mathrm{Supp}(\delta)$ be $a_1, \cdots, a_w $.  Let $\pi$ be the linear mapping given by
\begin{eqnarray*}
\begin{array}{rl}
\pi: \mathrm{RM}(d,n) & \longrightarrow \mathbb F_2^{w+1}\\
(h(x))_{x\in \mathbb F_2^n} & \mapsto (h(a_i))_{i=0}^{w}.
\end{array}
\end{eqnarray*}
We next claim that the mapping $\pi$ is surjective.
Suppose the claim was false. Then the image of $\pi$ lies  in a hyperplane of $\mathbb F_2^{w+1}$ and
thus we could find a non-zero vector $(c_{a_i})_{i=0}^{w} \in \mathbb F_2^{w+1}$
such that $c_{a_0} h(a_0) + \cdots c_{a_w} h(a_w)=0$ for any $n$-variable Boolean function $h$ of
degree at most $d$. It follows that the vector $(c_x)_{x\in \mathbb F_2^n}$
defined by $c_x= \left \{ \begin{array}{rr}
 c_{a_i}, & \text{ if } x=a_i\\
 \\
 0, & \text{ otherwise}
 \end{array} \right .
$ belongs to the dual code $\mathrm{RM}(d,n)^{\perp}= \mathrm{RM}(n-d-1,n) $ of $\mathrm{RM}(d,n)$.
Note that the weight of the codeword $(c_x)_{x\in \mathbb F_2^n}$ of $\mathrm{RM}(n-d-1,n) $
is less or equal to $w+1$, which contradicts the facts that the minimum distance of
$\mathrm{RM}(n-d-1,n) $ is at least $2^{d+1}$ and $w+1\le 2^{d+1}-1$. Thus $\pi$ is a surjective mapping.
In particular,  there exists a polynomial $h$ of
degree at most $d$ such that $h(a_0)=1$ and $h(a_i)=0$ for $1\le i \le w$.
We have then $(f+\delta)\cdot gh=0$ or $(f+1+\delta)\cdot gh=0$ and $gh$ is an annihilator of $f+\delta$ or of $f+1+\delta$ and
it is nonzero since $(gh)(a_0)=1$. This implies that $AI(f+\delta)\leq AI(f)+d$ and applying this result to $f+\delta$ instead of $f$ gives $AI(f)\leq AI(f+\delta)+d$, which completes the proof. \qed
\end{proof}

Note that this result and the result from \cite{JohanssonWang2010} mentioned above are complementary of each other since the condition of being at low Hamming distance from a low algebraic immunity function is a weaker assumption than being at low Hamming distance from a function of low algebraic degree, and moreover the weakness against standard algebraic attacks is still worse than the weakness against fast algebraic attacks (because when they apply, algebraic attacks are more efficient than fast algebraic attacks), but the result from \cite{JohanssonWang2010} still applies for functions at low Hamming distance from a function whose algebraic degree is not necessarily low, but is not high either; indeed it says that if $w_H(\delta)<\sum_{i=0}^d {n\choose i}$ and $f$ has algebraic degree $k$ then $FAI(f+\delta)\leq k+2d$.\\

Let us now investigate if $\FAI(f)$ and $\FAI(f^c)$ are close or
not. To this end, we shall need the following Lemma.
\begin{lemma}\label{lem:util_encadrement_FAI_alg_compl}
  Let $f\not=\delta_0$ be an $n$-variable Boolean function. Let $g$
  achieving $\FAI(f)$. Then there exists an $n$-variable affine
  function $l$ vanishing at $0$ such that $f\cdot g\cdot l\not=0$.
\end{lemma}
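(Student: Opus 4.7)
The plan is a two-case support argument on $h := f \cdot g$. Since $g \in \AN^c(f)$, we have $h \neq 0$, so I must find a non-zero linear function $l$ (equivalently an affine function vanishing at $0$) with $h \cdot l \neq 0$.

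First I would dispose of the easy case in which $\mathrm{supp}(h) \not\subseteq \{0\}$: pick any $a \in \mathrm{supp}(h)$ with $a \neq 0$, choose a coordinate index $i$ with $a_i = 1$, and set $l(x) := x_i$. Then $l$ is linear with $l(0) = 0$ and $(f \cdot g \cdot l)(a) = h(a)\, a_i = 1$, so $f \cdot g \cdot l \neq 0$.

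It remains to rule out the degenerate case $\mathrm{supp}(h) \subseteq \{0\}$. Since $h \neq 0$, this would force $h = \delta_0$, whose algebraic degree equals $n$; hence $\FAI(f) = \deg(g) + n$. By Definition \ref{dfi:immunite_algebrique}, $g \in \AN^c(f)$ with $g \not\equiv 1$, so in particular $g \not\equiv 0$; consequently $g$ is non-constant and $\deg(g) \geq 1$. This yields $\FAI(f) \geq n + 1$, contradicting the bound $\FAI(f) \leq n$ recalled in Remark \ref{rem:courtois}. The hypothesis $f \neq \delta_0$ is exactly what makes this contradiction legitimate: for $f = \delta_0$ itself, $h = \delta_0$ is unavoidable and no $l$ with $l(0) = 0$ can satisfy $\delta_0 \cdot l \neq 0$, so the lemma genuinely needs the exclusion.

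The main obstacle is the degenerate case; once the upper bound $\FAI(f) \leq n$ is invoked, the minimality of $g$ delivers the contradiction. The easy case is immediate via a single coordinate function.
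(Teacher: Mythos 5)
Your proposal is correct and follows essentially the same route as the paper: the paper argues by contraposition, showing that if $f\cdot g\cdot l=0$ for every linear $l$ then $f\cdot g=\delta_0$, whence $\FAI(f)=n+\deg(g)\leq n$ forces a contradiction — exactly your degenerate case. The only cosmetic difference is that the paper concludes $g=1$ and contradicts $f\neq\delta_0$, whereas you contradict $g\not\equiv 1$ directly via $\deg(g)\geq 1$; both rest on the same bound $\FAI(f)\leq n$.
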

\begin{proof}
  Suppose that for every $n$-variable affine Boolean function $l$
  vanishing at $0$, $f\cdot g\cdot l=0$. Then, for any
  $1\leq i\leq n$, $f\cdot g\cdot l_i=0$ where
  $l_i(x_1,\dots,x_n)=x_i$, that is, $f\cdot g=(1+l_i)h_i$ for some
  $n$-variable Boolean function $h_i$. Therefore,
  $f\cdot g=\delta_0=\prod_{i=1}^nl_i$.  Now,
  $FAI(f)=\deg(f\cdot g)+\deg(g)=n+\deg(g)\leq n$. Hence, $g=1$
  contradicting $f\not=\delta_0$. \qed
\end{proof}

We begin with showing.

\begin{proposition}
  Let $f$ be an $n$-variable Boolean function. Suppose $f\not=\delta_0$
  and $f^c\not=\delta_0$. Then
  $$\vert\FAI(f^c)-\FAI(f)\vert\leq 2.$$
\end{proposition}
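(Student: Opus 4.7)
The natural plan is to take a function $g$ achieving $\FAI(f)$ and modify it by multiplying by a cleverly chosen affine function so that the modified function becomes a witness for $\FAI(f^c)$, with controlled degree increase. The key algebraic identity is
\[
f^c \cdot g = (f+\delta_0)\cdot g = f\cdot g + \delta_0\cdot g,
\]
so $f^c \cdot g$ and $f\cdot g$ differ only at $0$. To get rid of the rogue point $0$, I would multiply by an affine form vanishing at $0$.

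Concretely, first I would invoke Lemma~\ref{lem:util_encadrement_FAI_alg_compl} (applicable since $f\neq\delta_0$) to obtain an affine function $l$ with $l(0)=0$ and $f\cdot g\cdot l\neq 0$; in particular $l$ is nonzero, hence has degree exactly $1$. Then I would set $g' := g\cdot l$. The key computation is
\[
f^c \cdot g' \;=\; (f+\delta_0)\cdot g\cdot l \;=\; f\cdot g\cdot l + \delta_0\cdot g\cdot l \;=\; f\cdot g\cdot l,
\]
because $\delta_0 \cdot l = 0$ (as $l(0)=0$). So $f^c\cdot g' = f\cdot g\cdot l \neq 0$, and therefore $g'\in\AN^c(f^c)$; moreover $\deg(g')\geq \deg(l)=1$ so $g'\not\equiv 1$, meaning $g'$ is admissible in the definition of $\FAI(f^c)$.

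Next I would estimate degrees: $\deg(g') \leq \deg(g)+1$ and $\deg(f^c\cdot g') = \deg(f\cdot g\cdot l) \leq \deg(f\cdot g)+1$. Adding these,
\[
\FAI(f^c) \;\leq\; \deg(g') + \deg(f^c\cdot g') \;\leq\; \deg(g)+\deg(f\cdot g)+2 \;=\; \FAI(f)+2.
\]

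Finally, since $(f^c)^c = f + \delta_0 + \delta_0 = f$ and the hypothesis $f^c\neq \delta_0$ allows us to apply Lemma~\ref{lem:util_encadrement_FAI_alg_compl} to $f^c$ in place of $f$, the symmetric argument gives $\FAI(f) \leq \FAI(f^c) + 2$, yielding $|\FAI(f^c)-\FAI(f)|\leq 2$. The one subtle point, which is precisely what the preparatory Lemma is designed to handle, is ensuring that after multiplying by $l$ the product does not collapse to zero; without the Lemma, the naive $g'=g\cdot l$ could fail to witness anything. Everything else is a one-line degree bookkeeping.
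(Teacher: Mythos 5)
Your proposal is correct and follows essentially the same route as the paper's own proof: both hinge on Lemma~\ref{lem:util_encadrement_FAI_alg_compl} to produce an affine $l$ vanishing at $0$ with $f\cdot g\cdot l\neq 0$, use $\delta_0\cdot l=0$ to identify $f^c\cdot g\cdot l$ with $f\cdot g\cdot l$, pay at most $2$ in degree, and conclude by symmetry. (The paper phrases the choice of $g$ via the fast immunity profile of Proposition~\ref{FAIMUL}, but that is only a cosmetic difference; your degree bookkeeping, which bounds $\deg(f^c\cdot g\cdot l)$ by $\deg(f\cdot g)+1$ directly, is if anything slightly cleaner.)
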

\begin{proof}
  According to Proposition \ref{FAIMUL},
  $\FAI(f)=\min_{1\leq k\leq n}(k+\mu_k(f))$.  Let $k\geq 1$
  achieving $\FAI(f)$: $k+\mu_k(f)=\FAI(f)$.
   Let $g$ achieving $\mu_k(f)$:
  $\deg(f\cdot g)=\mu_k(f)$.

  Now, observe that, for any $n$-variable Boolean function $p$,
  $$
  f^c\cdot p
  =f\cdot p + \delta_0\cdot p =
  \begin{cases}
    f\cdot p & \mbox{if $p(0)=0$}\\
    f\cdot p + \delta_0 & \mbox{ if $p(0)=1$}
  \end{cases}
  $$
  But in all cases, for any $n$-variable affine Boolean function $l$
  vanishing at $0$,  $f\cdot g\cdot l=f^c\cdot g\cdot l$ since
  $\delta_0\cdot l=0$. According to Lemma
  \ref{lem:util_encadrement_FAI_alg_compl}, there exists $l$ such that
  $f\cdot g\cdot l=f^c\cdot g\cdot l\not=0$. Then
  \begin{displaymath}
    \begin{split}
      \FAI(f^c) &\leq\deg(f^c\cdot g\cdot l)+\deg(g\cdot l)\\
      &\leq\deg(f^c\cdot g)+\deg(g) + 2 = \FAI(f)+2.
    \end{split}
  \end{displaymath}
  Now, since the algebraic complement of $f^c$ is $f$ itself.
  One can exchange the role of $f$ and its algebraic complement $f^c$ in the
  above arguments and prove
  \begin{displaymath}
    \FAI(f)\leq\FAI(f^c)+2.
  \end{displaymath}
  \qed
\end{proof}
\begin{remark}
  Following the above proof, if the $n$-Boolean function $g$ achieving
  $\FAI(f)$ vansihes at $0$. Then, one has
  $\FAI(f^c)\leq\deg(f^c\cdot g)+\deg(g)=\deg(f\cdot
  g)+\deg(g)=\FAI(f)$.
  Therefore, if $\FAI(f^c)$ is also achieved by an $n$-variable
  Boolean function vanishing at $0$ then,
  $\FAI(f^c)\geq\FAI(f)$. Therefore, we might have $\FAI(f)=\FAI(f^c)$
  for some subclasses of $n$-variable Boolean functions.
\end{remark}
\begin{remark}
  Observe that the condition $f\not=\delta_0$ is not restrictive since
  $\AI(\delta_0)=1$ ($\delta_0\cdot l=0$ if $l(0)=0$ and $\deg(l)=1$).
\end{remark}

\subsection{Fast algebraic immunity and concatenation of Boolean functions}
\label{sec:fast-algebr-immun}

A classical secondary constructions of Boolean functions from Boolean functions in lower dimension
is the following.
\begin{definition}
  Let $f_0$ and  $f_1$ be two $(n-1)$-variable  Boolean functions. The
  concatenation  of  $f_0$  with  $f_1$  is  the  $n$-variable  Boolean
  function defined, for $x=(x_1,\dots,x_n)\in\V n$, by
  \begin{equation}\label{eq:concatenation}
    \begin{split}
    f(x_1,\dots,x_n) &= (x_n+1)f_0(x_1,\dots,x_{n-1})\\
    &\qquad+ x_nf_1(x_1,\dots,x_{n-1})
    \\&=
    \begin{cases}
      f_0(x_1,\dots,x_{n-1}) &\mbox{if $x_n=0$}\\
      f_1(x_1,\dots,x_{n-1}) &\mbox{if $x_n=1$}.
    \end{cases}
  \end{split}
\end{equation}
\end{definition}

Any $n$-variable Boolean function of algebraic degree $k$ can be
written
\begin{equation}\label{eq:g}
  \begin{split}
    g(x_1,\dots,x_n) &=  (x_n+1) g_0(x_1,\dots,x_{n-1}) \\
    &\quad + x_n g_1(x_1,\dots,x_{n-1})
  \end{split}
\end{equation}
where $g_0$ and $g_1$ are $(n-1)$-Boolean function and
\begin{displaymath}
  \deg(g) = \max(\deg(g_0),\deg(g_0+g_1)+1).
\end{displaymath}
Observe that, the product $f\cdot g$, where $f$ is given by
(\ref{eq:concatenation}) and $g$ is given by (\ref{eq:g}), is
\begin{displaymath}
  \begin{split}
    &f(x_1,\dots,x_n)g(x_1,\dots,x_n) \\
    &\quad=  (x_n+1)f_0(x_1,\dots,x_{n-1})g_0(x_1,\dots,x_{n-1}) \\
    &\qquad + x_nf_1(x_1,\dots,x_{n-1})g_1(x_1,\dots,x_{n-1}).
  \end{split}
\end{displaymath}
Hence
\begin{equation}\label{eq:degfg}
  \begin{split}
    &\deg(f\cdot g)+\deg(g)=\max\Big(\deg(f_0g_0)+\deg(g_0),\\
    &\qquad \deg(f_0g_0)+\deg(g_0+g_1)+1,\\
    &\qquad \deg(f_0g_0+f_1g_1)+\deg(g_0)+1,\\
    &\qquad\deg(f_0g_0+f_1g_1)+\deg(g_0+g_1)+2\Big).
  \end{split}
\end{equation}
Based on this observation, we prove
\begin{proposition}\label{prop:concatenation}
  Let $n$ be a positive integer greater than $1$. Let $f_0$ and $f_1$
  be two $(n-1)$-variable Boolean functions. Let $f$ be the
  $n$-variable Boolean function obtained by concatenating $f_0$ with
  $f_1$.  Then
  \begin{displaymath}
    \FAI(f)\geq\min(\FAI(f_0),\FAI(f_1)+1),
  \end{displaymath}
  and
  \begin{displaymath}
    \FAI(f)\leq\min(\FAI(f_0),\FAI(f_1))+2.
  \end{displaymath}
\end{proposition}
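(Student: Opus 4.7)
The plan is to exploit the slice decomposition $g=(x_n+1)g_0+x_n g_1$ of any $n$-variable Boolean function in terms of its two $(n-1)$-variable restrictions at $x_n=0$ and $x_n=1$, together with the induced identity $fg=f_0 g_0+x_n(f_0 g_0+f_1 g_1)$ which underlies equation~(\ref{eq:degfg}). Two elementary observations drive everything: $\deg(g_0),\deg(g_1)\le \deg(g)$ (the restrictions of $g$ have degree at most that of $g$), and any admissible $h$ yields $\FAI(f_i)\le \deg(h)+\deg(f_i h)$ by the very definition of $\FAI$.

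For the upper bound, I would exhibit two explicit witnesses. Let $g_0\in\B_{n-1}$ achieve $\FAI(f_0)$, i.e.\ $f_0 g_0\ne 0$ and $\deg(g_0)+\deg(f_0 g_0)=\FAI(f_0)$, and set $g:=(x_n+1)g_0$ (so that $g_1=0$). Using $(x_n+1)^2=x_n+1$ and $x_n(x_n+1)=0$ over $\F$, one computes $fg=(x_n+1) f_0 g_0\ne 0$ and obtains $\deg(g)=\deg(g_0)+1$ and $\deg(fg)=\deg(f_0 g_0)+1$, whence $\FAI(f)\le \FAI(f_0)+2$. Symmetrically, taking $g:=x_n g_1$ with $g_1$ achieving $\FAI(f_1)$ yields $\FAI(f)\le \FAI(f_1)+2$, and combining gives the claimed upper bound.

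For the lower bound, let $g$ achieve $\FAI(f)=\deg(g)+\deg(fg)$ and decompose $g$ as above; since $fg\ne 0$, at least one of $f_0 g_0$ and $f_1 g_1$ is nonzero, and I split accordingly. If $f_0 g_0\ne 0$, then the $x_n=0$ slice of $fg$ is exactly $f_0 g_0$, so $\deg(fg)\ge \deg(f_0 g_0)$; combining with $\deg(g)\ge\deg(g_0)$ and the admissibility of $g_0$ in the minimization defining $\FAI(f_0)$ gives $\deg(g)+\deg(fg)\ge \deg(g_0)+\deg(f_0 g_0)\ge \FAI(f_0)$. If $f_0 g_0=0$, then $fg=x_n f_1 g_1$, which forces $f_1 g_1\ne 0$ and $\deg(fg)=1+\deg(f_1 g_1)$; combining with $\deg(g)\ge\deg(g_1)$ gives $\deg(g)+\deg(fg)\ge \deg(g_1)+1+\deg(f_1 g_1)\ge \FAI(f_1)+1$. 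The minimum of the two cases delivers $\FAI(f)\ge \min(\FAI(f_0),\FAI(f_1)+1)$.

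The main subtlety is the asymmetry between $\FAI(f_0)$ and $\FAI(f_1)+1$ in the lower bound. It reflects the asymmetric roles of $x_n+1$ and $x_n$ in the decomposition: when $f_0 g_0 \ne 0$ the $x_n=0$ slice preserves $f_0 g_0$ at no extra degree cost, whereas when $f_0 g_0 =0$ the surviving contribution $x_n f_1 g_1$ inevitably carries a mandatory $+1$. A minor technical point is that the restrictions $g_0$ or $g_1$ may be constant functions; the $\mu_k$ formulation of Proposition~\ref{FAIMUL} tacitly allows the constant $1$ as a test function in $\MUL_k$, and under that convention the inequalities above handle the constant case uniformly without any extra argument.
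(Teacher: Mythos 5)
Your proof is correct and follows essentially the same route as the paper's: the same slice decomposition $g=(x_n+1)g_0+x_ng_1$, the same two witnesses $(x_n+1)g_0$ and $x_ng_1$ for the upper bound, and the same case split on whether $f_0g_0$ vanishes for the lower bound (your direct computation $\deg(x_nf_1g_1)=\deg(f_1g_1)+1$ is in fact cleaner than the paper's subcase analysis on $\deg(g_0+g_1)$ versus $\deg(g_1)$). The one caveat --- that a restriction $g_0$ or $g_1$ could equal the constant $1$ and hence not be a legal competitor in the minimum defining $\FAI(f_i)$ --- is silently ignored by the paper as well; you do flag it, although appealing to Proposition~\ref{FAIMUL} alone does not quite close it (the index $k\ge 1$ in $k+\mu_k$ still exceeds $\deg(1)=0$ by one), the missing slack being recovered instead from $\deg(g)\ge\deg(g_0)+1$, which holds whenever $g_0\equiv 1$ and $g\not\equiv 1$.
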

\begin{proof}
   Let $g$ an $n$-variable Boolean
  function achieving $\FAI(f)$: $\FAI(f)=\deg(f\cdot g)+\deg(g)$ and
  $f\cdot g\not=0$.  This Boolean function can be written as
  (\ref{eq:g}).  Since $f\cdot g\not=0$, either $f_0\cdot g_0\not=0$
  either $f_1\cdot g_1\not=0$. If $f_0\cdot g_0\not=0$ then, according
  to (\ref{eq:degfg}),
  $\FAI(f)\geq\deg(f_0g_0)+\deg(g_0)\geq\FAI(f_0)$. If $f_0g_0=0$ and
  $f_1g_1\not=0$, then (\ref{eq:degfg}) rewrites as
  \begin{displaymath}
    \begin{split}
      &\deg(f\cdot g)+\deg(g)=\max\Big(
      \deg(f_1g_1)+\deg(g_0)+1,\\
      &\qquad\deg(f_1g_1)+\deg(g_0+g_1)+2\Big).
    \end{split}
  \end{displaymath}
  The result follows then from nothing that
  \begin{itemize}
  \item either $\deg(g_0+g_1)<\deg(g_1)$, wich implies
    $\deg(g_0)=\deg(g_1)$ and thus
    $\FAI(f)\geq \deg(f_1g_1)+\deg(g_0)+1\geq\FAI(f_1)+1$,
  \item either $\deg(g_0+g_1)\geq\deg(g_1)$ which implies that
     $\FAI(f)\geq \deg(f_1g_1)+\deg(g_1)+2\geq\FAI(f_1)+2\geq\FAI(f)+1$.
  \end{itemize}
  Conversely, if we take $g_0$ achieving $\FAI(f_0)$ and $g_1=0$ in
  (\ref{eq:degfg}), then, we get
  \begin{displaymath}
    \FAI(f)\leq \deg(g) = \deg(f_0g_0)+\deg(g_0)+2=\FAI(f_0)+2.
  \end{displaymath}
  Likewise, if we take $g_1$ achieving $\FAI(f_1)$ and $f_0=0$ then,
  \begin{displaymath}
    \FAI(f) \leq \deg(g) = \deg(f_1g_1)+\deg(g_1)+2=\FAI(f_1)+2.
  \end{displaymath}
  \qed
\end{proof}

In \cite{CarletTang2015}, the authors have considered such
construction to design Boolean functions suitable for the filter model
of pseudo-random generator. More precisely, they have considered the particular
case of the concatenation of an $(n-1)$-variable Boolean function $f$
with its complement $1+f$ to $1$. Let us denote $\bar f$ such a concatenation:
\begin{equation}\label{eq:fbar}
  \begin{split}
    \bar f(x_1,\dots,x_n) &= x_n + f(x_1,\cdots,x_{n-1})\\
    &= (x_n+1)f(x_1,\cdots,x_{n-1})\\
    &\quad + x_n (1+f(x_1,\cdots,x_{n-1})).
  \end{split}
\end{equation}
We then deduce from Proposition \ref{prop:concatenation}
\begin{corollary}
  Let $f$ be an $(n-1)$-variable Boolean function. Let $\bar f$ be
  defined by (\ref{eq:fbar}). Then
  \begin{displaymath}
    \min(\FAI(f),\FAI(1+f)+1)\leq\FAI(\bar f)\leq\FAA(f)+2.
  \end{displaymath}
\end{corollary}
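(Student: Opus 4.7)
The plan is to observe that $\bar f$ as defined by (\ref{eq:fbar}) is exactly the concatenation of $f_0 = f$ with $f_1 = 1+f$ in the sense of the previous subsection, and then apply Proposition~\ref{prop:concatenation} verbatim. So no new combinatorial work is required; the entire content of the corollary is bookkeeping.

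More precisely, for the lower bound I would instantiate Proposition~\ref{prop:concatenation} with $f_0 := f$ and $f_1 := 1+f$, which gives directly
\[
\FAI(\bar f) \geq \min(\FAI(f_0),\FAI(f_1)+1) = \min(\FAI(f),\FAI(1+f)+1).
\]
For the upper bound I would apply the second inequality of the same proposition to get
\[
\FAI(\bar f) \leq \min(\FAI(f_0),\FAI(f_1)) + 2 = \min(\FAI(f),\FAI(1+f)) + 2,
\]
and then rewrite the right-hand side using Definition~\ref{dfi:immunite_algebrique1}, namely $\FAA(f) = \min(\FAI(f),\FAI(1+f))$, to conclude $\FAI(\bar f) \leq \FAA(f)+2$.

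There is essentially no obstacle: the only thing to check is the identification of $\bar f$ with a concatenation, which is literally the second equality in (\ref{eq:fbar}). One might briefly remark, for the reader's benefit, that the two bounds in Proposition~\ref{prop:concatenation} are not symmetric in $f_0$ and $f_1$ (the lower bound has a $+1$ attached only to $\FAI(f_1)$), so there is a genuine asymmetry in the corollary's lower bound between $\FAI(f)$ and $\FAI(1+f)$, even though one could instead have labelled $f_0 := 1+f$ and $f_1 := f$ and obtained the symmetric-looking bound $\min(\FAI(1+f),\FAI(f)+1)$; taking the maximum of the two gives a slightly stronger statement, but the corollary as stated is already exactly what Proposition~\ref{prop:concatenation} delivers for the chosen labelling.
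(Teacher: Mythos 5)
Your proposal is correct and matches the paper exactly: the paper states this corollary as a direct consequence of Proposition~\ref{prop:concatenation} applied to the concatenation $\bar f$ of $f_0=f$ with $f_1=1+f$, which is precisely your argument. The remark on the asymmetry of the lower bound is a valid (and mildly useful) aside, but the core derivation is the same bookkeeping the paper intends.
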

Now, note that $1+\bar f$ is the concatenation of $1+f$ with $f$:
\begin{displaymath}
  \begin{split}
    1 + \bar f(x_1,\dots,x_n) &= 1 + x_n + f(x_1,\cdots,x_{n-1})\\
    &= (x_n+1)(1+f(x_1,\cdots,x_{n-1}))\\
    &\quad + x_n f(x_1,\cdots,x_{n-1}).
  \end{split}
\end{displaymath}
Therefore
\begin{corollary}
  Let $f$ be an $(n-1)$-variable Boolean function. Let $\bar f$ be
  defined by (\ref{eq:fbar}). Then
  \begin{displaymath}
    \FAA(f)\leq\FAA(\bar f)\leq\FAA(f)+2.
  \end{displaymath}
\end{corollary}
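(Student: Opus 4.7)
The plan is to leverage the previous corollary together with the symmetric observation already made in the excerpt that $1+\bar f$ is the concatenation of $1+f$ with $f$. The previous corollary gave two-sided bounds on $\FAI(\bar f)$; by applying exactly the same machinery (Proposition \ref{prop:concatenation}) to $1+\bar f$ viewed as the concatenation of $1+f$ with $f$, I would obtain analogous two-sided bounds on $\FAI(1+\bar f)$. Since $\FAA(\bar f)=\min(\FAI(\bar f),\FAI(1+\bar f))$, both bounds in the corollary should fall out by taking the minimum of the two pairs of inequalities.

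More concretely, first I would record the four inequalities
\begin{displaymath}
\FAI(\bar f)\geq \min(\FAI(f),\FAI(1+f)+1), \qquad \FAI(\bar f)\leq \FAA(f)+2,
\end{displaymath}
\begin{displaymath}
\FAI(1+\bar f)\geq \min(\FAI(1+f),\FAI(f)+1), \qquad \FAI(1+\bar f)\leq \FAA(f)+2.
\end{displaymath}
The first line is the previous corollary; the second line follows from Proposition \ref{prop:concatenation} applied with $f_0=1+f$ and $f_1=f$, using (as the excerpt explicitly notes) that $1+\bar f$ is precisely the concatenation of $1+f$ with $f$.

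The upper bound is then immediate: $\FAA(\bar f)=\min(\FAI(\bar f),\FAI(1+\bar f))\leq \FAA(f)+2$. For the lower bound, I would combine the two lower-bound inequalities to get
\begin{displaymath}
\FAA(\bar f)\geq \min\bigl(\FAI(f),\FAI(1+f)+1,\FAI(1+f),\FAI(f)+1\bigr),
\end{displaymath}
and observe that this four-term minimum collapses to $\min(\FAI(f),\FAI(1+f))=\FAA(f)$, since each of the ``$+1$'' terms is dominated by its companion without the $+1$. This gives $\FAA(f)\leq \FAA(\bar f)$ and completes the proof.

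There is no real obstacle here, since all the analytic work has already been done in Proposition \ref{prop:concatenation} and the previous corollary. The only thing to be careful about is the bookkeeping in the second application of Proposition \ref{prop:concatenation}: one must correctly identify the ``$f_0$'' and ``$f_1$'' in the decomposition of $1+\bar f$ so as to obtain the inequalities with the roles of $f$ and $1+f$ swapped. Once that is done, the collapse of the four-term minimum to $\FAA(f)$ is essentially a tautology.
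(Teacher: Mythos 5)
Your proposal is correct and follows exactly the paper's intended argument: the paper likewise derives the result by applying Proposition \ref{prop:concatenation} to $\bar f$ (as the concatenation of $f$ with $1+f$) and to $1+\bar f$ (as the concatenation of $1+f$ with $f$), then takes minima. Your explicit verification that the four-term minimum collapses to $\FAA(f)$ is a detail the paper leaves implicit, but there is no difference in approach.
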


\section{Fast algebraic immunity and LCD codes}\label{sec:FAI-LCD}
In this section we shall establish the relation between fast algebraic immunity, perfect algebraic immune functions, punctured Reed-Muller codes and binary LCD codes.

The link between the algebraic immunity of Boolean functions and the dimensions of punctured Reed-Muller codes is
described in the following.
\begin{proposition}\label{prop:AI-dim}
 Let $e$ be a positive integer.  Let $f$ be an $n$-variable Boolean function and let $D$ be its support.
 Then the algebraic immunity of $f$ is greater than $e$ if and only if the dimensions of the two punctured Reed-Muller codes
 $\mathrm{RM}(e,n)^{D}$ and $\mathrm{RM}(e,n)^{\overline{D}}$ are both equal to $\mathrm{dim}_{\mathbb F_2} \left ( \mathrm{RM(e,n)} \right )$.
\end{proposition}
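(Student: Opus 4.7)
The plan is to interpret both conditions via the evaluation-based description of Reed-Muller codes and to identify kernels of puncturing maps with sets of annihilators.

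First I would fix the linear evaluation map $\mathrm{ev}: \{h \in \mathcal{B}_n : \deg(h) \leq e\} \to \mathbb{F}_2^{2^n}$ sending $h$ to $(h(P_0),\dots,h(P_{2^n-1}))$, whose image is exactly $\mathrm{RM}(e,n)$ and which is an $\mathbb{F}_2$-linear isomorphism onto its image (the dimensions match by Theorem \ref{thm:RM}(1)). Puncturing in the coordinates indexed by $D$ is the linear surjection $\pi_D : \mathrm{RM}(e,n) \to \mathrm{RM}(e,n)^D$ that deletes the entries of a codeword at positions in $D$. Composing, the map $\pi_D \circ \mathrm{ev}$ evaluates a degree-$\leq e$ function only at points of $\overline{D}$, so its kernel consists exactly of those $h$ of degree $\leq e$ vanishing on $\overline{D}$.

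Now I would observe that $\overline{D}$ is by definition the support of $1+f$, so the condition $h_{|\overline{D}} = 0$ is the same as $h\cdot(1+f) = 0$, i.e., $h$ is an annihilator of $1+f$ (or the zero function). By the rank-nullity theorem, $\dim_{\mathbb{F}_2}(\mathrm{RM}(e,n)^D) = \dim_{\mathbb{F}_2}(\mathrm{RM}(e,n))$ if and only if $\pi_D$ is injective, equivalently the only degree-$\leq e$ annihilator of $1+f$ is the zero function, which is precisely $\mathrm{LDA}(1+f) > e$. Swapping the roles of $D$ and $\overline{D}$ (and using that $D$ is the support of $f$), the same argument gives $\dim_{\mathbb{F}_2}(\mathrm{RM}(e,n)^{\overline{D}}) = \dim_{\mathbb{F}_2}(\mathrm{RM}(e,n))$ if and only if $\mathrm{LDA}(f) > e$.

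Finally I would combine these two equivalences: both dimension equalities hold simultaneously if and only if $\min(\mathrm{LDA}(f), \mathrm{LDA}(1+f)) > e$, which by definition is exactly $\mathrm{AI}(f) > e$. There is no real obstacle here; the only subtlety is keeping straight which set of coordinates is kept versus deleted when translating ``puncturing at $D$'' into ``evaluating on $\overline{D}$'', and making sure to invoke the dimension statement of Theorem \ref{thm:RM}(1) to guarantee that $\mathrm{ev}$ is injective so that the $\pi_D$-kernel really corresponds bijectively to the space of low-degree annihilators.
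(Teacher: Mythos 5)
Your proof is correct and follows essentially the same route as the paper's: both identify the kernel of the restriction-to-$D$ (resp.\ to $\overline{D}$) map on degree-$\le e$ functions with the annihilators of $f$ (resp.\ of $1+f$) and read off the dimension condition. The only cosmetic difference is that you phrase it as a clean chain of equivalences via rank--nullity, whereas the paper argues each direction separately by contradiction.
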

\begin{proof}
Let $\AI(f)>e$. Assume by way of contradiction,
\begin{eqnarray}\label{eq:dim<dim}
\mathrm{dim}_{\mathbb F_2} \left (\mathrm{RM}(e,n)^{\overline{D}} \right )< \mathrm{dim}_{\mathbb F_2} \left ( \mathrm{RM(e,n)} \right ).
\end{eqnarray}
Consider the linear transformation $Res_{D}$ from $\mathrm{RM(e,n)}$ to the punctured code $\mathrm{RM}(e,n)^{\overline{D}}$ defined by
\begin{eqnarray*}
(g(x))_{\mathbb{F}_2^n} \mapsto (g(x))_{x\in D}.
\end{eqnarray*}
By assumption in (\ref{eq:dim<dim}), there exists a nonzero function $g\in \B_n$ of degree at most $e$ such
that $Res_D(g)=(g(x))_{x\in D}=\mathbf{0}$. Then $fg=0$, contrary to $\AI(f)>e$. Hence
$\mathrm{dim}_{\mathbb F_2} \left (\mathrm{RM}(e,n)^{\overline{D}} \right )= \mathrm{dim}_{\mathbb F_2} \left ( \mathrm{RM(e,n)} \right ).$
By a similar argument, we can show
 that $\mathrm{dim}_{\mathbb F_2} \left (\mathrm{RM}(e,n)^{{D}} \right )= \mathrm{dim}_{\mathbb F_2} \left ( \mathrm{RM(e,n)} \right ).$

As for the converse, suppose the assertion is false. Then we could find a nonzero function $g$ of degree at most $e$
such that $fg=0$ or $(1+f)g=0$. By symmetry, one can assume that $fg=0$. We then have $Res_D(g)=(g(x))_{x\in D}$
is the all zeros codeword of $\mathrm{RM}(e,n)^{\overline{D}}$. We conclude that
the linear transformation $Res_D$  is surjective but not injective. This clearly forces
$\mathrm{dim}_{\mathbb F_2} \left (\mathrm{RM}(e,n)^{\overline{D}} \right )< \mathrm{dim}_{\mathbb F_2} \left ( \mathrm{RM(e,n)} \right ),$
a contradiction. Therefore $\AI (f) >e$. \qed
\end{proof}

Given a nonzero Boolean function $f$, let $\left < \mathbf{1}_{\wt (f)} \right >$
denote the binary code generated by the all-ones vector $\mathbf{1}_{\wt (f)}$ of length $\wt (f)$.
To treat fast algebraic immunity of Boolean functions, we need to invoke punctured Reed-Muller codes.
\begin{lemma}\label{lem:deg-dual}
 Let $e, e'$ and $n$ be positive    integers. Let $f$ be an $n$-variable nonzero Boolean function and let $D$ be its support.
 Then the intersection of $\mathrm{RM}(e,n)^{\overline{D}}$ and
 $\left ( \mathrm{RM}(e',n)^{\overline{D}} \right )^{\perp}$
is included in $\left < \mathbf{1}_{\wt (f)} \right >$  if and only if   $\deg (fg) \ge n-e'$ holds
for any $n$-variable nonzero  Boolean function $g\in AN^{c}(f) \setminus \left (1+\AN (f) \right )$
 of degree at most $e$, where $1+\AN (f)=\left \{1+h: h \in \AN (f) \right \}$.
\end{lemma}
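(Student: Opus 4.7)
The plan is to unfold both sides of the equivalence: a vector in $\mathrm{RM}(e,n)^{\overline{D}} \cap \left(\mathrm{RM}(e',n)^{\overline{D}}\right)^{\perp}$ will be reinterpreted as the restriction to $D$ of a Boolean function $g$ subject to two degree constraints, and containment in $\langle \mathbf{1}_{\wt(f)} \rangle$ will be reinterpreted as a constraint on the annihilator structure of $g$. The starting point is the identification of the dual code: by Proposition \ref{prop:punc-shor-dual} together with Theorem \ref{thm:RM}(2), one has $\left(\mathrm{RM}(e',n)^{\overline{D}}\right)^{\perp} = \mathrm{RM}(n-e'-1,n)_{\overline{D}}$, and by the definition of shortening the codewords of this shortened code are exactly the restrictions $(h(x))_{x \in D}$ of $n$-variable Boolean functions $h$ with $\deg h \leq n-e'-1$ and $\mathrm{supp}(h) \subseteq D$.

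Next I would describe the intersection explicitly. A vector $v$ lies in both codes iff there exist $g$ with $\deg g \leq e$ and $h$ with $\deg h \leq n-e'-1$ and $\mathrm{supp}(h) \subseteq D$ such that $v = (g(x))_{x \in D} = (h(x))_{x \in D}$. Since $f$ is the indicator of $D$, the function $fg$ agrees with $g$ on $D$ and vanishes off $D$, so $fg = h$; in particular $\deg(fg) \leq n-e'-1$. Conversely, any $g$ with $\deg g \leq e$ and $\deg(fg) \leq n-e'-1$ produces, via $h := fg$, such a vector. Hence
\begin{displaymath}
\mathrm{RM}(e,n)^{\overline{D}} \cap \left(\mathrm{RM}(e',n)^{\overline{D}}\right)^{\perp} = \left\{(g(x))_{x \in D} \ :\ \deg g \leq e,\ \deg(fg) \leq n-e'-1\right\}.
\end{displaymath}

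Then I would translate the containment in $\langle \mathbf{1}_{\wt(f)} \rangle = \{\mathbf{0}, \mathbf{1}_{\wt(f)}\}$ into a constraint on $g$: the restriction $(g(x))_{x \in D}$ equals $\mathbf{0}$ iff $fg = 0$, i.e.\ $g \in \AN(f) \cup \{0\}$; and it equals $\mathbf{1}_{\wt(f)}$ iff $f(1+g) = 0$, i.e.\ $g \in (1+\AN(f)) \cup \{1\}$. Hence the intersection is contained in $\langle \mathbf{1}_{\wt(f)} \rangle$ iff every nonzero $g$ with $\deg g \leq e$ and $\deg(fg) \leq n-e'-1$ lies in $\AN(f) \cup (1+\AN(f))$. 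Taking the contrapositive yields the statement that every nonzero $g \in \AN^c(f) \setminus (1+\AN(f))$ of degree at most $e$ satisfies $\deg(fg) \geq n-e'$, which is the right-hand side.

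The main delicacy is the correct bookkeeping between an $n$-variable function and its restriction to $D$: the key observation that makes everything line up is that extending by zero off $D$ coincides with multiplying by $f$, so the shortening condition on the dual side translates into the single degree bound $\deg(fg) \leq n-e'-1$. Once this identification is isolated, the rest is a straightforward case analysis of when a restriction to $D$ is constant, and the equivalence follows by contraposition.
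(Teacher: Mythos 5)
Your proof is correct and follows essentially the same route as the paper's: identify $\left(\mathrm{RM}(e',n)^{\overline{D}}\right)^{\perp}$ with the shortened code $\mathrm{RM}(n-e'-1,n)_{\overline{D}}$ via Proposition \ref{prop:punc-shor-dual} and the duality statement in Theorem \ref{thm:RM}, then use the observation that extension by zero off $D$ coincides with multiplication by $f$ to translate membership in the intersection into the pair of conditions $\deg(g)\le e$ and $\deg(fg)\le n-e'-1$. The only organizational difference is that the paper runs both implications as separate proofs by contradiction, whereas you first write down the intersection explicitly as a set and then read off the equivalence; the mathematical content is identical.

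One point deserves attention: your own case analysis is more precise than your closing sentence. You correctly record that $(g(x))_{x\in D}=\mathbf{1}_{\wt(f)}$ iff $g\in(1+\AN(f))\cup\{1\}$, so the containment of the intersection in $\left<\mathbf{1}_{\wt(f)}\right>$ is equivalent to the degree condition for all nonzero $g\in\AN^c(f)\setminus\left((1+\AN(f))\cup\{1\}\right)$; in the final contrapositive the $\{1\}$ silently disappears. This matters, because the constant function $g=1$ is nonzero, lies in $\AN^c(f)\setminus(1+\AN(f))$ (since $0\notin\AN(f)$), and has degree $0\le e$, yet its restriction to $D$ is $\mathbf{1}_{\wt(f)}\in\left<\mathbf{1}_{\wt(f)}\right>$; so when $\deg(f)<n-e'$ the left-hand side of the lemma can hold while the right-hand side, read literally, fails (e.g. $f=x_1$, $n=3$, $e=e'=1$). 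This is an imprecision already present in the lemma as printed, and the paper's own proof has the same wrinkle: it asserts that $g\in\AN^{c}(f)\setminus(1+\AN(f))$ forces $(g(x))_{x\in D}\neq\mathbf{1}_{\wt(f)}$, which is false for $g=1$. It is harmless downstream, since Theorem \ref{thm:FAI-s-1-RM} only invokes the lemma for $\deg(g)\ge 1$ and treats the all-ones vector separately through Lemma \ref{lem:1-deg-high}, but the clean statement should exclude $g=1$, exactly as your intermediate step indicates.
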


\begin{proof}
By Proposition \ref{prop:punc-shor-dual},
\begin{eqnarray*}
\left ( \mathrm{RM}(e',n)^{\overline{D}}  \right )^{\perp} = \left ( \mathrm{RM}(e',n)^{\perp}  \right )_{\overline{D}}.
\end{eqnarray*}
Invoking Part (5) of Theorem \ref{thm:RM}, we get
\begin{eqnarray}\label{eqn:deg-l-h}
\left ( \mathrm{RM}(e',n)^{\overline{D}}  \right )^{\perp} =\mathrm{RM}(n-e'-1,n)  _{\overline{D}}.
\end{eqnarray}

Let us first prove the only if part, so let us suppose that $\mathrm{RM}(e,n)^{\overline{D}}
 \cap \left ( \mathrm{RM}(e',n)^{\overline{D}} \right )^{\perp} \subseteq \left < \mathbf{1}_{\wt (f)} \right > $.
 If there existed a function  $g\in AN^{c}(f) \setminus \left ( 1 +\AN (f)\right )$ such that  $\deg(g)\le e$ and $\deg (fg) \le  n-e'-1$,
 we would have
 \begin{eqnarray*}
 (g(x))_{x \in D} \in \mathrm{RM}(e,n)^{\overline{D}}
 \end{eqnarray*}
 and
 \begin{eqnarray*}
 (g(x)f(x))_{x\in D} \in \mathrm{RM}(n-e'-1,n)_{\overline{D}}.
 \end{eqnarray*}
Then, taking $g\in AN^{c}(f)$ and $supp(f)=D$  into account, one
sees that the nonzero codeword $(g(x))_{x \in D}$ is not equal to $\mathbf{1}_{\wt (f)}$ and lies in the intersection of the punctured code
$\mathrm{RM}(e,n)^{\overline{D}}$ and the shortened code $\mathrm{RM}(n-e'-1,n)_{\overline{D}}$.
From (\ref{eqn:deg-l-h}), we deduce that
$\mathrm{RM}(e,n)^{\overline{D}} \cap \left ( \mathrm{RM}(e',n)^{\overline{D}}  \right )^{\perp}$
is not included in $\left < \mathbf{1}_{\wt (f)} \right >$, a contradiction.
Hence the proof of the only if part is concluded.

For the converse, suppose the assertion of the lemma is false. Then
we could find a function $g \in AN^{c}(f) $ of degree at most $e$ such that
\begin{eqnarray*}
(g(x))_{x\in D} \in \left ( \mathrm{RM}(e,n)^{\overline{D}} \cap \left ( \mathrm{RM}(e',n)^{\overline{D}}  \right )^{\perp} \right ) \setminus \left < \mathbf{1}_{\wt (f)} \right >.
\end{eqnarray*}
Thus, we can apply (\ref{eqn:deg-l-h}) to conclude that $1+g \not \in \AN (f)$ and
$$(g(x)f(x))_{x\in D}=(g(x))_{x\in D}\in \mathrm{RM}(n-e'-1,n)_{\overline{D}}.$$
We thus get $\deg (gf) \le n-e'-1$, a contradiction. This completes the proof. \qed
\end{proof}

\begin{lemma}\label{lem:1-deg-high}
Let $f$ be an $n$-variable nonzero Boolean function and let $D$ be its support. Then
$\mathbf 1_{\wt (f)} \not \in \left ( \mathrm{RM} (e',n)^{\overline{D}} \right )^{\perp}$
if and only if $\deg (f) \ge n-e'$.
\end{lemma}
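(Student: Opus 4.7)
The plan is to translate the membership condition $\mathbf{1}_{\wt(f)}\in (\mathrm{RM}(e',n)^{\overline{D}})^{\perp}$ into a purely algebraic statement about $f$, and then read off the degree condition directly. The key bridge is Proposition \ref{prop:punc-shor-dual} together with part (2) of Theorem \ref{thm:RM}, which lets us replace the dual of a punctured Reed-Muller code by a shortened Reed-Muller code of complementary order. More precisely, I would begin by writing
\begin{displaymath}
\left(\mathrm{RM}(e',n)^{\overline{D}}\right)^{\perp}
 = \left(\mathrm{RM}(e',n)^{\perp}\right)_{\overline{D}}
 = \mathrm{RM}(n-e'-1,n)_{\overline{D}},
\end{displaymath}
so that the question becomes whether $\mathbf{1}_{\wt(f)}$ lies in the shortened code $\mathrm{RM}(n-e'-1,n)_{\overline{D}}$.

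Next, I would unfold the definition of shortening. A codeword of $\mathrm{RM}(n-e'-1,n)_{\overline{D}}$ is, by construction, of the form $(h(x))_{x\in D}$ where $h\in\B_n$ satisfies $\deg(h)\le n-e'-1$ and $h(x)=0$ for every $x\in\overline{D}$. For such a vector to equal $\mathbf{1}_{\wt(f)}$ we additionally need $h(x)=1$ on $D$. Since $D=supp(f)$, these two conditions together say precisely that $h$ coincides with $f$ as a Boolean function on $\V n$. Hence $\mathbf{1}_{\wt(f)}\in\mathrm{RM}(n-e'-1,n)_{\overline{D}}$ if and only if $f$ itself has algebraic degree at most $n-e'-1$.

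Rewriting this equivalence contrapositively gives: $\mathbf{1}_{\wt(f)}\notin (\mathrm{RM}(e',n)^{\overline{D}})^{\perp}$ if and only if $\deg(f)\ge n-e'$, which is exactly the claim. I do not anticipate any real obstacle here; the only subtle point is the observation that the shortening requirement $h|_{\overline{D}}=0$ combined with the value constraint $h|_{D}=1$ forces $h=f$ \emph{uniquely}, so the membership question reduces to a degree test on a single specific function rather than an existence question over a family. Once that is noted, the rest is just bookkeeping using the two cited results.
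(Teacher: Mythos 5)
Your proposal is correct and follows essentially the same route as the paper: both convert $\left(\mathrm{RM}(e',n)^{\overline{D}}\right)^{\perp}$ into the shortened code $\mathrm{RM}(n-e'-1,n)_{\overline{D}}$ via Proposition \ref{prop:punc-shor-dual} and Theorem \ref{thm:RM}, and then appeal to the definition of shortening. The paper leaves the final step as ``follows from the definition,'' whereas you usefully spell out the key point that the constraints $h|_{\overline{D}}=0$ and $h|_{D}=1$ force $h=f$, reducing membership to the degree test $\deg(f)\le n-e'-1$.
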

\begin{proof}
By (\ref{eqn:deg-l-h}), $\mathbf 1_{\wt (f)} \not \in \left ( \mathrm{RM} (e',n)^{\overline{D}} \right )^{\perp}$
if and only if $\mathbf 1_{\wt (f)} \not \in \mathrm{RM} (n-e'-1,n)_{\overline{D}} $.
The desired conclusion then follows from the definition of $\mathrm{RM} (n-e'-1,n)_{\overline{D}}$. \qed
\end{proof}

 The following theorem provides a characterization of fast algebraic immunity of $n$-variable higher degree Boolean functions
 by means of punctured Reed-Muller codes.
\begin{theorem}\label{thm:FAI-s-1-RM}
Let $s$ be a positive integer.  Let $f$ be an $n$-variable nonzero Boolean function with $\deg (f) \ge s-1$
 and let $D$ be its support.
Then the fast algebraic immunity of $f$ is greater than or equal to $s$ if and only if
 $\mathrm{RM}(e,n)^{\overline{D}} \cap \left ( \mathrm{RM}(e+n-s,n)^{\overline{D}} \right )^{\perp} = \left \{  \mathbf{0} \right \}$
  holds for any $1\le e \le n$.
\end{theorem}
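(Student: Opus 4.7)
The plan is to set $e'=e+n-s$ and apply Lemma~\ref{lem:deg-dual} together with Lemma~\ref{lem:1-deg-high}; the degree hypothesis $\deg(f)\ge s-1$ will enter precisely to absorb two exceptional classes of $g$ that otherwise prevent a clean matching between the code-theoretic condition and the definition of $\FAI$.

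First I would observe, via Lemma~\ref{lem:1-deg-high} with $e'=e+n-s$, that for every $1\le e\le n$ we have $\mathbf{1}_{\wt(f)}\notin \bigl(\mathrm{RM}(e+n-s,n)^{\overline{D}}\bigr)^{\perp}$, since the required inequality $\deg(f)\ge n-e'=s-e$ follows immediately from $\deg(f)\ge s-1$. As $\langle \mathbf{1}_{\wt(f)}\rangle=\{\mathbf{0},\mathbf{1}_{\wt(f)}\}$ has only two elements, this upgrades ``contained in $\langle \mathbf{1}_{\wt(f)}\rangle$'' to ``equal to $\{\mathbf{0}\}$'' at no cost.

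Next I would apply Lemma~\ref{lem:deg-dual} with the same $e'=e+n-s$: for each $e$, the containment of the intersection in $\langle \mathbf{1}_{\wt(f)}\rangle$ is equivalent to $\deg(fg)\ge n-e'=s-e$ for every $g\in \AN^c(f)\setminus(1+\AN(f))$ with $\deg(g)\le e$. Quantifying over all $e$ from $1$ to $n$ and specializing to the tightest instance $e=\deg(g)$, the code-theoretic hypothesis becomes the single statement $\deg(g)+\deg(fg)\ge s$ for every $g\in \AN^c(f)\setminus(1+\AN(f))$.

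Finally I would reconcile this with the definition $\FAI(f)\ge s$, which asks the same inequality for $g\in \AN^c(f)$ with $g\ne 1$. The symmetric difference of the two quantifier domains consists of (a) the constant $g=1$, present on the code side but excluded from $\FAI$; and (b) the shift $1+\AN(f)$, present in $\FAI$ but excluded from Lemma~\ref{lem:deg-dual}. For (a), $\deg(fg)=\deg(f)\ge s-1\ge s-e$, so the code-side condition is trivially satisfied; for (b), any $g=1+h$ with $h\in \AN(f)$ gives $fg=f$ and $\deg(g)=\deg(h)\ge 1$ (because $1\notin \AN(f)$ when $f\ne 0$), hence $\deg(g)+\deg(fg)\ge 1+(s-1)=s$, again by the hypothesis. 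Thus both exceptional classes contribute no additional constraint, the two conditions coincide, and the theorem follows. The main (and essentially only) obstacle is this careful bookkeeping around the excluded class $1+\AN(f)$ and the excluded constant $g=1$; the hypothesis $\deg(f)\ge s-1$ is exactly sharp for this matching.
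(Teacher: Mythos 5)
Your proof is correct and takes essentially the same route as the paper: both directions reduce to Lemma~\ref{lem:deg-dual} and Lemma~\ref{lem:1-deg-high} with $e'=e+n-s$, with the hypothesis $\deg(f)\ge s-1$ used exactly to eliminate $\mathbf{1}_{\wt(f)}$ from the intersection and to absorb the exceptional functions $g=1$ and $g\in 1+\AN(f)$. Your bookkeeping of the two quantifier domains is in fact slightly more explicit than the paper's, which leaves the case $g=1$ implicit in the forward direction.
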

\begin{proof}
Let $f$ be an $n$-variable Boolean function with $\FAI (f) \ge s$.
By the definition of fast algebraic immunity, we have
\begin{eqnarray*}
\deg(g) + \deg(gf) \ge s \text{ for any } g \in \AN^c(f) \setminus \{1\}.
\end{eqnarray*}
Now, as then, we can assert that $\deg (gf) \ge s-\deg (g) \ge n-(e +n-s)$
for any $g\in \AN^c(f)$ with $1\le \deg(g) \le e$.
Therefore $\mathrm{RM}(e,n)^{\overline{D}} \cap \left ( \mathrm{RM}(e+n-s,n)^{\overline{D}} \right )^{\perp}  \subseteq \left  < \mathbf{1}_{\wt (f)} \right >$
holds for any $1 \le e \le n$
by Lemma \ref{lem:deg-dual}.
As $\deg (f) \ge s-1 \ge n- (e+n-s)$ we have $\mathrm{RM}(e,n)^{\overline{D}} \cap \left ( \mathrm{RM}(e+n-s,n)^{\overline{D}} \right )^{\perp} =\{\mathbf{0}\} $
from Lemma \ref{lem:1-deg-high}.

Conversely, assume that for any $1 \le e \le n$ one has
$$\mathrm{RM}(e,n)^{\overline{D}} \cap \left ( \mathrm{RM}(e+n-s,n)^{\overline{D}} \right )^{\perp} =\left \{ \mathbf  0\right \}.$$
Suppose the theorem were false. Then we could find a Boolean function $g \in \AN^c (f) \setminus \{1\}$ such that $\deg(g) + \deg(gf) <s$.
It follows that $\deg(gf) <n-(e+n-s) \le \deg (f)$ and $g(1+f) \not \equiv 0$, where $e=\deg(g)$.
Lemma \ref{lem:deg-dual} now implies that $\mathrm{RM}(e,n)^{\overline{D}} \cap \left ( \mathrm{RM}(e+n-s,n)^{\overline{D}} \right )^{\perp}$
is included in $\left < \mathbf{1}_{\wt (f)} \right >$, a contradiction. This completes the proof. \qed
\end{proof}

The following theorem gives a characterization of perfect algebraic immune
functions using the LCD-ness of the punctured codes of Reed-Muller codes by deleting the coordinates outside the
supports of the Boolean functions.
\begin{theorem}\label{thm:PAI-LCD}
Let $f$ be an $n$-variable nonzero Boolean function and let $D$ be its support.
Then $f$ is a perfect algebraic immune function if and only if
 $\mathrm{RM}(e,n)^{\overline{D}}$ is an LCD code for any $1\le e \le n$.
\end{theorem}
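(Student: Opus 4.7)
My plan is to derive this theorem as a specialization of Theorem \ref{thm:FAI-s-1-RM} to the case $s = n$, using the observation from Remark \ref{rem:courtois} that a Boolean function is perfect algebraic immune precisely when $\FAI(f) \geq n$. With $s = n$, the intersection condition $\mathrm{RM}(e,n)^{\overline{D}} \cap (\mathrm{RM}(e+n-s,n)^{\overline{D}})^{\perp} = \{\mathbf{0}\}$ collapses to $\mathrm{RM}(e,n)^{\overline{D}} \cap (\mathrm{RM}(e,n)^{\overline{D}})^{\perp} = \{\mathbf{0}\}$, which is exactly the LCD property. The only obstacle is the auxiliary hypothesis $\deg(f) \geq s - 1 = n - 1$ that Theorem \ref{thm:FAI-s-1-RM} imposes; the bulk of the work consists in securing this bound in each of the two directions.

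For the forward implication, I would apply the perfect AI condition at $e = 1$ to the constant function $g = 1$, which has algebraic degree $0 \leq 1$ and satisfies $f \cdot g = f \neq 0$. This yields $\deg(f) \geq n - 1$ immediately. Combined with $\FAI(f) \geq n$, Theorem \ref{thm:FAI-s-1-RM} with $s = n$ then delivers the LCD property of $\mathrm{RM}(e,n)^{\overline{D}}$ for every $1 \leq e \leq n$.

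For the converse, I would extract the degree bound from the LCD hypothesis itself. The all-ones vector $\mathbf{1}_{\wt(f)}$ lies in $\mathrm{RM}(e,n)^{\overline{D}}$ for every $e \geq 0$, since it is obtained by puncturing the evaluation of the constant function $1$ to the support $D$. Hence the LCD assumption at $e = 1$ forces $\mathbf{1}_{\wt(f)} \notin (\mathrm{RM}(1,n)^{\overline{D}})^{\perp}$, and Lemma \ref{lem:1-deg-high} converts this into $\deg(f) \geq n - 1$. A reverse application of Theorem \ref{thm:FAI-s-1-RM} with $s = n$ then yields $\FAI(f) \geq n$, i.e., $f$ is perfect algebraic immune. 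The main conceptual hurdle, in both directions, is precisely this mismatch between the clean statement of the LCD characterization and the technical degree hypothesis of Theorem \ref{thm:FAI-s-1-RM}, which Lemma \ref{lem:1-deg-high} is tailored to close.
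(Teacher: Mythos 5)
Your proposal is correct and follows essentially the same route as the paper: specialize Theorem \ref{thm:FAI-s-1-RM} to $s=n$ so that the intersection condition becomes the LCD condition, and secure the hypothesis $\deg(f)\ge n-1$ in each direction. The only cosmetic differences are that you derive $\deg(f)\ge n-1$ directly from the perfect-immunity definition (taking $g=1$ at $e=1$) where the paper cites \cite{Pasalic2008}, and you spell out the use of $\mathbf{1}_{\wt(f)}\in \mathrm{RM}(1,n)^{\overline{D}}$ together with Lemma \ref{lem:1-deg-high} in the converse, which the paper leaves implicit.
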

\begin{proof}
Let $f$ be a perfect algebraic immune function.
The perfect algebraic immune function $f$  has degree at least $n-1$ (see \cite{Pasalic2008}).
By Theorem \ref{thm:FAI-s-1-RM} and $\FAI (f) = n$,  $\mathrm{RM}(e,n)^{\overline{D}}$ is  LCD  for any $1\le e \le n$.

Conversely, suppose that $\mathrm{RM}(e,n)^{\overline{D}}$ is an LCD code for any $1\le e \le n$.
The desired conclusion then follows from Theorem \ref{thm:FAI-s-1-RM} and Lemma \ref{lem:1-deg-high}. This completes the proof. \qed
\end{proof}

The following corollary has been proved in \cite{Liu2012} and we give an alternative proof of it based on coding theory.
\begin{corollary}
Let $f$ be an $n$-variable perfect algebraic immune function. Then
$n=2^{\tau}+1$ when $\wt(f)$ is even and $n=2^{\tau}$ when $\wt(f)$ is odd, where $\tau$ is a positive integer.
\end{corollary}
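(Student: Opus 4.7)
The plan is to combine Theorem \ref{thm:PAI-LCD} with the parity constraint in Proposition \ref{prop:dim-even}, after passing to the dual code. Set $D := \mathrm{supp}(f)$ and $C_{e} := \mathrm{RM}(e,n)^{\overline{D}}$. Theorem \ref{thm:PAI-LCD} gives that each $C_{e}$ is LCD for $1 \le e \le n$; since $\mathcal{C}$ and $\mathcal{C}^{\perp}$ share the same hull, the same holds for $C_{e}^{\perp} = \mathrm{RM}(n-e-1,n)_{\overline{D}}$ by Proposition \ref{prop:punc-shor-dual}.

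The key trick I would use is that $C_{e}^{\perp}$ is automatically even-like: the constant $1$ belongs to $\mathrm{RM}(e,n)$, so $\mathbf{1}_{\wt(f)} \in C_{e}$, and any $v \in C_{e}^{\perp}$ then satisfies $\sum_{i} v_{i} = \mathbf{1}_{\wt(f)} \cdot v = 0$. Proposition \ref{prop:dim-even} then forces $\dim C_{e}^{\perp}$ to be even. Since $f$ is perfect AI one has $\AI(f) = \lceil n/2 \rceil$, so Proposition \ref{prop:AI-dim} yields $\dim C_{e} = \sum_{i=0}^{e} \binom{n}{i}$ for $1 \le e \le \lceil n/2 \rceil - 1$. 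Combined with $\dim C_e + \dim C_e^{\perp} = \wt(f)$, this gives
\begin{displaymath}
\wt(f) \equiv \sum_{i=0}^{e} \binom{n}{i} \pmod{2}, \qquad 1 \le e \le \lceil n/2 \rceil - 1.
\end{displaymath}

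Specialising to $e = 1$ gives $\wt(f) \equiv n+1 \pmod 2$, so $\wt(f)$ is odd exactly when $n$ is even. Subtracting consecutive congruences forces $\binom{n}{e} \equiv 0 \pmod 2$ for every $2 \le e \le \lceil n/2 \rceil -1$. By Lucas' theorem, $\binom{n}{e}$ is odd precisely when the binary expansion of $e$ is pointwise dominated by that of $n$; hence any intermediate bit of $n$ at a position $j$ with $0 < j < \tau$, where $2^{\tau}$ is the leading power of $2$ of $n$, would produce the odd binomial $\binom{n}{2^{j}}$ inside the forbidden range $[2,\lceil n/2 \rceil -1]$. Consequently the binary expansion of $n$ reduces to its top bit plus possibly the units bit, giving $n = 2^{\tau}$ or $n = 2^{\tau} + 1$, and the congruence $\wt(f) \equiv n+1 \pmod 2$ then attaches the correct parity of $\wt(f)$ to each case.

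The only delicate step is the bookkeeping in the last paragraph, namely checking that $2^{j} \in [2,\lceil n/2 \rceil -1]$ whenever $0 < j < \tau$; this is the elementary estimate $2 \le 2^{j} \le 2^{\tau-1} < n/2 \le \lceil n/2 \rceil$, and it is exactly this inequality that rules out all $n$ outside the two admissible families while leaving $n = 2^{\tau}$ and $n = 2^{\tau}+1$ unconstrained.
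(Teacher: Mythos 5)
Your proposal is correct and follows essentially the same route as the paper: LCD-ness of the punctured Reed--Muller codes from Theorem \ref{thm:PAI-LCD}, the dimension formula from Proposition \ref{prop:AI-dim}, the even-likeness of the dual code via $\mathbf{1}_{\wt(f)}$, the parity constraint of Proposition \ref{prop:dim-even} giving $\wt(f)\equiv\sum_{i=0}^{e}\binom{n}{i}\pmod 2$, and Lucas' theorem to pin down the binary expansion of $n$. The only cosmetic difference is that you make explicit the passage to the dual (same hull) and the range bookkeeping with $\lceil n/2\rceil-1$ in place of $(n-1)/2$, which coincide.
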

\begin{proof}
Let $f$ be a perfect algebraic immune function and let $D$ be its support.
 Then $\AI (f) \ge n/2$ by Corollary \ref{cor:FAI<=2AI}.
Theorem \ref{thm:PAI-LCD} shows that  $\mathrm{RM}(e,n)^{\overline{D}}$ is an LCD code for any $1\le e \le (n-1)/2$.
Combining Proposition \ref{prop:AI-dim} with Theorem \ref{thm:RM} yields
 $\mathrm{dim}_{\mathbb F_2} \left ( \mathrm{RM}(e,n)^{\overline{D}} \right ) = \sum_{i=0}^{e} \binom{n}{i}$.
 Since $\mathbf{1}_{\wt (f)}$ lies in $\mathrm{RM}(e,n)^{\overline{D}}$,
 $\left ( \mathrm{RM}(e,n)^{\overline{D}} \right )^{\perp}$
 is an even-like LCD code with dimension $\wt (f) - \sum_{i=0}^{e} \binom{n}{i}$.
 We conclude from Proposition \ref{prop:dim-even} that $\wt (f) \equiv  \sum_{i=0}^{e} \binom{n}{i} \pmod{2}$ for any $1\le e \le (n-1)/2$.
 This clearly forces
 \begin{eqnarray}\label{eqn:binom-even}
 n\equiv \wt(f) +1 \pmod{2} \text{ and } \binom{n}{e} \equiv 0 \pmod{2},
 \end{eqnarray}
 where $2\le e \le (n-1)/2$.

Let us first consider the case $\wt (f) \equiv 0 \pmod{2}$. Write $n=2^{\tau}+\sum_{i=0}^{\tau-1} a_i 2^i$.
We have $a_0=1$, because  $n \equiv 1 \pmod{2}$. If there existed an $a_i$ such that $a_i=1$ and $1\le i \le \tau -1$,
we would have $2\le 2^i \le (n-1)/2$ and $\binom{n}{2^i} \equiv 1 \pmod{2}$ by Lucas' Theorem, contrary to (\ref{eqn:binom-even}).
Hence $a_i=0$ and $n=2^{\tau}+1$.

Similar arguments apply to the case $\wt (f) \equiv 1 \pmod{2}$. Then we have $n=2^{\tau}$ in this case.This completes the proof. \qed

\end{proof}

As a corollary of Proposition \ref{prop:AI-dim} and Theorem \ref{thm:PAI-LCD}, we have the following,
which provides a way of constructing LCD codes via perfect algebraic immune function.
\begin{corollary}\label{cor:FAI-LCD}
Let $f$ be an $n$-variable perfect algebraic immune function and let $D$ be its support.
Let $e$ be an integer with $1\le e \le (n-1)/2$.
Then
 $\mathrm{RM}(e,n)^{\overline{D}}$ is an LCD code of dimension $\sum_{i=0}^{e} \binom{n}{i}$.
\end{corollary}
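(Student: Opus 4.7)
The plan is to assemble the corollary directly from three results already established in the paper: Theorem \ref{thm:PAI-LCD} to obtain the LCD property, Proposition \ref{prop:AI-dim} to identify the dimension of $\mathrm{RM}(e,n)^{\overline{D}}$ with the dimension of the unpunctured mother code, and Part (1) of Theorem \ref{thm:RM} to evaluate that dimension as $\sum_{i=0}^{e}\binom{n}{i}$. Since $e$ is restricted to the range $1\leq e\leq (n-1)/2$, both ingredients should apply without any edge cases.

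First, I would invoke the hypothesis that $f$ is a perfect algebraic immune function. By Theorem \ref{thm:PAI-LCD}, the punctured Reed-Muller code $\mathrm{RM}(e,n)^{\overline{D}}$ is an LCD code for every integer $e$ with $1\leq e\leq n$, and in particular for every $e$ in the range $1\leq e\leq (n-1)/2$ considered here. This takes care of the LCD part of the conclusion without further work.

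Next, to pin down the dimension, I would observe that a perfect algebraic immune function satisfies $\FAI(f)=n$, so by Corollary \ref{cor:FAI<=2AI} one has $\AI(f)\geq n/2$. Hence for $1\leq e\leq (n-1)/2$ we have $\AI(f)>e$. Proposition \ref{prop:AI-dim} then yields
\begin{equation*}
  \dim_{\mathbb{F}_2}\bigl(\mathrm{RM}(e,n)^{\overline{D}}\bigr)=\dim_{\mathbb{F}_2}\bigl(\mathrm{RM}(e,n)\bigr),
\end{equation*}
and Part (1) of Theorem \ref{thm:RM} evaluates the right-hand side as $\sum_{i=0}^{e}\binom{n}{i}$.

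There is essentially no hard step here; the whole statement is a synthesis of three earlier results, and the main thing to be careful about is that the range $1\leq e\leq (n-1)/2$ is exactly what is needed to invoke $\AI(f)>e$ in Proposition \ref{prop:AI-dim}. The point worth flagging is that the bound $\AI(f)\geq n/2$ for perfect algebraic immune functions is not literally proved in the excerpt but follows immediately from Corollary \ref{cor:FAI<=2AI} together with the equality $\FAI(f)=n$ that characterizes perfect algebraic immunity (as recalled in Remark \ref{rem:courtois}); after that remark is in hand, the rest is purely a matter of stringing together citations.
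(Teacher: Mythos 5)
Your proof is correct and follows exactly the route the paper intends: the paper presents this statement as an immediate consequence of Proposition~\ref{prop:AI-dim} and Theorem~\ref{thm:PAI-LCD}, and the very chain you use ($\AI(f)\geq n/2$ via Corollary~\ref{cor:FAI<=2AI}, then Theorem~\ref{thm:PAI-LCD} for LCD-ness and Proposition~\ref{prop:AI-dim} with Theorem~\ref{thm:RM} for the dimension) is spelled out verbatim in the proof of the preceding corollary. No discrepancies to report.
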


Plugging all the families of perfect algebraic immune functions presented in \cite{CarletFeng2008} and  \cite{Liu2012} into Corollary \ref{cor:FAI-LCD}
 will produce a lot of binary LCD codes.

 \begin{corollary}
 Let $n=2^{\tau}$ or $2^{\tau}+1$. Let $D$ be a subset of $\mathbb F_{2^n}$ given by
 \begin{eqnarray*}
 D= \left \{
 \begin{array}{rl}
 \{\alpha^{\ell}, \alpha^{\ell+1}, \cdots,  \alpha^{\ell+2^{\tau-1}-1} \}, &\text{ if } n=2^{\tau}+1,\\
\\
\{0, \alpha^{\ell}, \alpha^{\ell+1}, \cdots,  \alpha^{\ell+2^{\tau-1}-1} \}, &\text{ if } n=2^{\tau},
 \end{array}
 \right .
 \end{eqnarray*}
 where $\alpha$ is a primitive element of $\mathbb F_{2^n}$ and $\ell$ is an integer.
Then
 $\mathrm{RM}(e,n)^{\overline{D}}$ is an LCD code of dimension $\sum_{i=0}^{e} \binom{n}{i}$ for any $1 \le e \le (n-1)/2$.
 \end{corollary}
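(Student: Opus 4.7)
The plan is to deduce the statement directly from the preceding Corollary \ref{cor:FAI-LCD} by exhibiting, in each of the two cases, an $n$-variable perfect algebraic immune Boolean function whose support is exactly the prescribed set $D$. Once this is done, every conclusion of the corollary—that $\mathrm{RM}(e,n)^{\overline{D}}$ is LCD and that its dimension equals $\sum_{i=0}^{e}\binom{n}{i}$ for $1\le e \le (n-1)/2$—is transported verbatim from Corollary \ref{cor:FAI-LCD}. So the only real task is a ``dictionary'' step that identifies $D$ as the support of a known perfect algebraic immune function.

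First, for $n=2^{\tau}+1$, I would invoke the Carlet--Feng construction from \cite{CarletFeng2008}: the $n$-variable function $f$ whose support is a window $\{\alpha^{0},\alpha^{1},\ldots,\alpha^{2^{n-1}-1}\}$ of consecutive powers of a primitive element was shown in \cite{Liu2012} to be perfect algebraic immune. For $n=2^{\tau}$, I would use the analogue from \cite{Liu2012}, namely the function whose support is $\{0\}\cup\{\alpha^{0},\alpha^{1},\ldots,\alpha^{2^{n-1}-2}\}$, which is likewise perfect algebraic immune. To match the integer $\ell$ appearing in the statement, I would then pass from such a reference function $f$ to its shift $f_\ell(x)=f(\alpha^{-\ell}x)$; the multiplicative map $x\mapsto \alpha^{-\ell}x$ is an $\mathbb{F}_2$-linear automorphism of $\mathbb{F}_{2^n}$, and by Proposition \ref{prop:affine_invariance} both $\AI$ and $\FAI$ are invariant under affine transformations. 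Hence $f_\ell$ is again perfect algebraic immune, and its support is precisely $D$.

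With $f_\ell$ in hand and $\mathrm{supp}(f_\ell)=D$, I would then apply Corollary \ref{cor:FAI-LCD} to $f_\ell$: for every $1\le e \le (n-1)/2$, the punctured Reed--Muller code $\mathrm{RM}(e,n)^{\overline{D}}$ is an LCD code of dimension $\sum_{i=0}^{e}\binom{n}{i}$. Note that neither the LCD property nor the dimension depends on the choice of $\ell$, since the map $x\mapsto \alpha^{-\ell}x$ merely permutes the coordinates of $\mathbb{F}_{2^n}$, leaving the resulting punctured code equivalent (in fact, identical up to coordinate relabelling) to the one obtained from the unshifted reference function.

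The main obstacle is not really technical but bibliographic: one must correctly import the perfect algebraic immunity of the Carlet--Feng and Liu--Feng--Lin functions from \cite{CarletFeng2008,Liu2012}, including the fact that the property is insensitive to the choice of the starting exponent $\ell$. Modulo that input, the corollary is a direct specialisation of Corollary \ref{cor:FAI-LCD}; no further coding-theoretic work is required.
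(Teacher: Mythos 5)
Your proposal is correct and follows exactly the route the paper intends: the paper gives no explicit proof of this corollary beyond the remark that one should ``plug'' the perfect algebraic immune functions of Carlet--Feng and Liu et al.\ into Corollary~\ref{cor:FAI-LCD}, which is precisely what you do, with the additional (welcome) detail that the shift by $\alpha^{-\ell}$ is an $\mathbb F_2$-linear automorphism so that perfect algebraic immunity is preserved. The only caveat is bibliographic bookkeeping on the supports (note the paper's displayed set $D$ appears to contain $2^{\tau-1}$ elements where the balanced reference functions have weight $2^{n-1}$, an apparent typo you implicitly correct), which does not affect the validity of your argument.
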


\section{Conclusions}

In this paper, we investigated some problems on fast algebraic immunity of Boolean functions and LCD codes. More specifically, we pushed further the general study of the fast algebraic immunity and investigated its behavior in particular for certain families of Boolean functions. We have also introduced the related fast immunity profile and showed that the algebraic immunity and the fast algebraic immunity of a Boolean function can be expressed by means of its immunity profile. In addition, we provided new characterizations of perfect algebraic immune  functions  by means of the LCD-ness of punctured Reed-Muller codes. We also contributed to the current work on binary LCD codes (which are the most important codes regarding its applications in armoring implementations against side-channel attacks and fault non-invasive attacks) by constructing a large class of binary LCD codes from perfect algebraic immune functions. The results show  a novel application of perfect algebraic immune functions in addition to their contribution in symmetric cryptography. This offers a new direction of research in this context.


\end{document}